\newtheorem{example}{Example}
\newtheorem{theorem}{Theorem}
\newtheorem{definition}{Definition}
\newtheorem{lemma}{Lemma}
\newtheorem{proposition}{Proposition}
\newtheorem{mechanism}{Mechanism}
\title{Sybil-Proof Mechanism for Information Propagation with Budgets}
\author{
	Junjie Zheng$^{1}$\and
	Xu Ge$^1$\and
	Bin Li $^2$\and
	Dengji Zhao $^{1*}$
	\\
	\affiliations
	{{$^1$ShanghaiTech University\\
			$^2$Nanjing University of Science and Technology\\}}
	\emails
	{\{zhengjj, gexu, zhaodj\}@shanghaitech.edu.cn, cs.libin@njust.edu.cn}
}
\begin{document}
	\maketitle
	
\begin{abstract}
	This paper examines the problem of distributing rewards on social networks to improve the efficiency of crowdsourcing tasks for sponsors.
	%We study the problem of distributing rewards on social networks to help sponsors achieve better results for some crowdsourcing tasks.
	To complete the tasks efficiently, we aim to design reward mechanisms that incentivize early-joining agents to invite more participants to the tasks. Nonetheless, participants could potentially engage in strategic behaviors, e.g., not inviting others to the tasks, misreporting their capacity for the tasks, or creaking fake identities (aka Sybil attacks), to maximize their own rewards. The focus of this study is to address the challenge outlined above by designing effective reward mechanisms. To this end, 
	%However, the participants may invite others strategically or create fake identities, aka Sybil attacks, so as to gain more rewards. To tackle the above issues, 
	we propose a novel reward mechanism, called Propagation Reward Distribution Mechanism (PRDM), for the general information propagation model with limited budgets. It is proved that the PRDM can not only incentivize all agents to contribute their full efforts to the tasks and share the task information to all their neighbors in the social networks, but can also prevent them from Sybil attacks.

\end{abstract}

%%%%%%%%%%%%%%%%%%%%%%%%%%%%%%%%%%%%%%%%%%%%%%%%%%%%%%%%%%%%%%%%%%%%%%%%

\section{Introduction}
The widespread availability of mobile Internet devices has fostered greater interconnectedness among individuals via social networks and amplified the impact of information spread through social connections.\footnote{$^*$Corresponding Author.}
%With the widespread availability of mobile Internet devices, people are closely connected via social networks, and spreading information to more individuals through their friends is making a significant impact. 
In certain fields, including viral marketing~\cite{leskovec2006}, crowdsourcing distribution~\cite{singer2011,doan2011}, answer querying~\cite{kleinberg2005}, sponsors frequently incentivize participants with monetary rewards to gather as much data or sell as many products as possible. In 2005, Amazon launched a crowdsourcing platform called Amazon Mechanical Turk (MTurk) to gather data from non-professionals. On the MTurk platform, the sponsors can post tasks and rewards, and then the workers claim the tasks and receive payments accordingly based on the quantity and quality of their completed tasks. Many studies requiring extensive data started collecting data through MTurk~\cite{SorokinF2008}. One study in 2019 showed that more than 250,000 people have completed at least one task on MTurk~\cite{robinson2019}. However, a large percentage of these workers are fixed, which is mainly because that inviting new people to join is not beneficial. Making existing workers invite more people to participate can significantly improve efficiency.

% The proliferation of mobile Internet devices has facilitated greater connectivity among individuals through social networks and has magnified the influence of information dissemination via social connections. In certain fields, including viral marketing~\cite{leskovec2006}, crowdsourced distribution~\cite{singer2011,doan2011}, and answer querying~\cite{kleinberg2005}, sponsors frequently incentivize participants with financial rewards in order to amass copious amounts of data or promote the sale of numerous products. In 2005, Amazon launched a crowdsourcing platform known as Amazon Mechanical Turk (MTurk), which allows non-professionals to provide data. On this platform, sponsors can post tasks and rewards, and workers may claim tasks and receive payment based on the quality and quantity of their completed tasks. Many investigations that necessitate extensive data have begun utilizing MTurk~\cite{SorokinF2008}. A 2019 analysis indicated that more than a quarter of a million people have completed at least one task on MTurk~\cite{robinson2019}. However, a significant portion of these workers is fixed, primarily because inviting additional people to join is not advantageous. Encouraging current workers to invite more people to participate can greatly enhance efficiency.

In this paper, we aim to adequately utilize people's connections in the network to design a reward distribution mechanism~\cite{ZhangZ22}. This mechanism incentivizes agents to invite more people to participate by the reward distribution, which eventually improves the overall completion efficiency. The first difficulty is distributing the rewards within a constrained budget. The mechanism should motivate agents to spread the information in their social network as much as possible. In the DARPA network challenge~\cite{galen2011,darpa2011}, the winning team from MIT used a pioneering mechanism to effectively motivate people to spread information and quickly found all ten red balloons. In multi-level marketing~\cite{EmekKTZ2011,DruckerF2012}, the seller expects to sell more products by attracting more people to purchase. In our problem setting, we also need to properly allocate the limited budget to participants.

% This paper aims to leverage individuals' network connections by designing a reward distribution mechanism to incentivize agents to invite more participants and boost overall completion efficiency \cite{ZhangZ22}. One of the challenges faced is effectively distributing the rewards within a constrained budget, while also motivating agents to disseminate the information as widely as possible. The MIT team, which won the DARPA network challenge \cite{galen2011, darpa2011}, utilized an innovative mechanism to encourage people to quickly disseminate information and locate all ten red balloons. Similarly, in multi-level marketing \cite{EmekKTZ2011, DruckerF2012}, the seller anticipates selling more products by recruiting more individuals to purchase. Our approach requires a strategic allocation of the limited budget to participants.

Another difficulty is resolving Sybil attacks in social networks. A Sybil attack is when participants create multiple false identities to accomplish specific purposes. Sybil attacks are widespread and easily performed, affecting eventual results and harming others~\cite{AlothaliZMA18,yu2006,ZhangLLS14}. Traditional defense approaches are mainly focused on the communication domain~\cite{ChenPWYZL21,JamshidiZEDM19,ZhangL19}. Scholars have extensively studied this phenomenon in various domains, such as the Vickrey-Clarke-Groves process in auction theory is vulnerable to Sybil attacks~\cite{YokooSM04}, and Yokoo et al.~\cite{YokooSM01} developed a new protocol against false-name bids. In Bitcoin transactions, Babaioff et al.~\cite{BabaioffDOZ2012} devised a scheme that rewards information propagation to prevent Sybil attacks to make more revenue. In crowdsourcing, individuals have different abilities, such as computing power, purchasing advertising, or providing data. Emek et al.~\cite{EmekKTZ11} solved the problem of Sybil attacks in viral marketing by rewarding propagation behavior based on the size of a maximum perfect binary tree. We aim to use this authentic contribution information to design an information propagation mechanism that defends against Sybil attacks.

% Sybil attacks in social networks pose a significant challenge, where participants create multiple fake identities to achieve specific objectives. Such attacks are pervasive and could easily impact the outcome of events while causing harm to individuals~\cite{AlothaliZMA18,yu2006,ZhangLLS14}. Currently, traditional defense approaches focus on the communication domain~\cite{ChenPWYZL21,JamshidiZEDM19,ZhangL19}. Scholars have extensively studied this phenomenon in various domains, such as the Vickrey-Clarke-Groves process in auction theory, which remains vulnerable to Sybil attacks~\cite{YokooSM04}. However, Yokoo et al.~\cite{YokooSM01} proposed a new protocol against false-name bids to address this issue. Similarly, Babaioff et al.~\cite{BabaioffDOZ2012} devised a reward-based scheme that incentivizes information propagation to mitigate Sybil attacks in Bitcoin transactions. In crowdsourcing, individuals have different abilities, such as computing power, purchasing advertising, or providing data. Emek et al.~\cite{EmekKTZ11} overcame the challenge of Sybil attacks in viral marketing by rewarding propagation behavior based on the size of a maximum perfect binary tree. Our study aims to employ authentic contribution information to design an information propagation mechanism that is resistant to Sybil attacks.

In this paper, our mechanism drives improvements in the following dimensions.
\begin{itemize}
\item We propose a model that quantifies an agent's contribution by introducing the concept of capacity. The model considers the general setting of Sybil attacks.
% We improve the model of information propagation on social networks and implement a high-level abstraction of information propagation networks in many practical domains on the graph by introducing the concept of participant contribution capacity.
\item We propose a novel natural mechanism to allocate rewards that maximize information propagation within a limited budget while resisting Sybil attacks.
\end{itemize}

% This paper offers a proposal for a mechanism to drive advancements in multiple dimensions. Firstly, we introduce a model that assesses an agent's contribution using the concept of capacity and considers the general context of Sybil attacks. Secondly, we suggest a new natural mechanism for distributing rewards that ensures maximum information dissemination within a constrained budget and safeguards against Sybil attacks.

\textbf{Related work.} With a fixed budget, Shi et al.~\cite{shi2020} devised a mechanism that maximizes information propagation but is not resistant to Sybil attacks. Chen et al.~\cite{chen2021} designed a special scenario of a free market with lotteries, where participants have a strong incentive to maximize the diffusion of information, and false-name manipulations fail to yield excessive rewards. In the answer querying problem, Zhang et al.~\cite{zhang2020} designed a mechanism that incentivizes the agents to propagate the requestor's query information while making the Sybil attack unavailable for additional gain. However, their mechanism only solves the scene of a single problem query in a tree. Hong et al.~\cite{hong2022} solved the problem of Sybil attacks in diffusion auctions by removing possible fake agents by graph-structured methods, providing a new approach to tackle similar issues.

% \textbf{Related Work.} Shi et al.~(2020) proposed a mechanism for information propagation with a fixed budget, which fails to resist Sybil attacks. Chen et al.~(2021) introduced a lottery-based free market model that incentivizes participants to maximize information diffusion, effectively preventing false-name manipulations. Concerning the answer querying problem, Zhang et al.~(2020) developed a mechanism that encourages agents to propagate the requestor's query information while nullifying the possibility of Sybil attacks. However, their approach is limited to a single query in a tree. Hong et al.~(2022) addressed Sybil attacks in diffusion auctions by removing fraudulent agents using graph-structured methods, providing a new avenue to tackle similar issues.

The remainder of this paper is organized as follows. Section~\ref{section2} describes the fundamental setup and definition of the model. Section~\ref{section3} shows our mechanism and an example of running the mechanism. Section~\ref{section4} shows the properties of our mechanism. In Section~\ref{section5}, we discuss these properties. In Section~\ref{section6}, we summarize our work and discuss possible future directions.

\section{The Model}\label{section2}
We consider the crowdsourcing problem powered by social networks, where a sponsor expects to leverage the social connections to recruit more participants (or agents) to some crowdsourcing task, e.g., data collecting. For convenience, we model the social connections of all agents as a directed graph $G = (V,E)$, where $V$ represents the set of vertices and $E$ denotes the edge set. Except for the sponsor $s$, the graph $G$ consists of a set $N=\{1,\ldots,n\}$ of agents who can contribute to the task, i.e., $V=\{s\}\cup N$. For each agent $i\in N$, we denote by $c_i$ the maximum contribution capacity (or simply, capacity) of $i$ for the task, e.g., $c_i$ can denote the affordable number of pictures that need to be labeled. For any two agents $i, j\in V$, there is an edge $(i,j)\in E$ if and only if agent $i$ can invite agent $j$. Given an edge $(i,j)\in E$, we say $j$ is a child of $i$ and use $n_i$ to denote the set of $i$'s children in $G$. Without promotions, the sponsor can only recruit her direct children $n_s$ to the task, and within such small number of participants the task may fail to be accomplished. To attract more agents, the sponsor plans to reward the participants to incentivize them to further spread the task information to their children, under a total budget of $B$, and the amount of each participant's reward is determined by her reports, including her performance on the task and her diffusion efforts.

As usual, let $t_i = (n_i,c_i)$ be agent $i$'s private type, where $n_i$ denotes the set of her children and $c_i > 0$ is her capacity. In addition, denote by $\mathbf{t} = (t_1,\ldots,t_n)$ the type profile of all agents, and $\mathbf{t}_{-i}$ the type profile of all agents except agent $i$, i.e., $\mathbf{t} = (t_i,\mathbf{t}_{-i})$. For convenience's sake, we use $\mathcal{T}_i = \mathcal{P}(N) \times \mathbb{R}^+$ to denote the type space of agent $i$ where $\mathcal{P}(N)$ is the power set of the set $N$, and $\mathcal{T} = \times \mathcal{T}_i$ to denote the space of all type profiles. 
%Next, we consider the agents' reports. For each agent $i \in N$, her private type is $t_i = (n_i,c_i)$, where $n_i$ denotes the set of her children and $c_i > 0$ is her capacity. Let $\mathbf{t} = (t_1,\ldots,t_n)$ be the type profile of all agents, and $\mathbf{t} = (t_i,\mathbf{t}_{-i})$ is its alternative denotation, where $\mathbf{t}_{-i}$ is the type profile of all agents except agent $i$. Let $\mathcal{T}_i = \mathcal{P}(N) \times \mathbb{R}^+$ be the type space of agent $i$ where $\mathcal{P}(N)$ is the power set of the set $N$. Let $\mathcal{T} = \times \mathcal{T}_i$ be the space of all type profiles. 
As $t_i$ is private information, agent $i$ can cheat the sponsor to benefit herself. Let $t^{\prime}_i = (n_i^{\prime},c^{\prime}_i)$ be the type reported by agent $i$, i.e., $i$ diffused information to $n_i^{\prime}$ and contributed $c^{\prime}_i$ to the task. 
%Denote the report profile of all agents by $\mathbf{t}^{\prime}$. 
Since agent $i$ is unaware of other agents in the graph who are not her children and cannot contribute more than her capacity, we require that $n_i^{\prime} \subseteq n_i$ and $c^{\prime}_i \in (0, c_i]$. Similarly, let $\mathbf{t}^{\prime} = (t^{\prime}_i,\mathbf{t}^{\prime}_{-i})$ denote the report profile of all agents, where $\mathbf{t}^{\prime}_{-i}$ represents the report profile of all agents except agent $i$. Accordingly, we use $\mathcal{T}^{\prime}_i = \mathcal{P}(n_i) \times (0,c_i]$ to denote the space of $t^{\prime}_i$, $\mathcal{T}^{\prime} = \times \mathcal{T}^{\prime}_{i}$ the space of $\mathbf{t}^{\prime}$, and $\mathcal{T}^{\prime}_{-i} = \times_{j\ne i} \mathcal{T}^{\prime}_{j}$ the space of $\mathbf{t}^{\prime}_{-i}$.

%In fact, agent $i \in N$ does not necessarily report her type truthfully. Let $t^{\prime}_i = (n_i^{\prime},c^{\prime}_i)$ be the type report of agent $i$, i.e., $i$ decided to diffuse information to $n_i^{\prime}$ and made actual contribution $c^{\prime}_i$. Denote the report profile of all agents by $\mathbf{t}^{\prime}$. Since agent $i$ is unaware of other agents in the network that are not her children and cannot contribute more than her capacity, we have $n_i^{\prime} \subseteq n_i$ and $0 < c^{\prime}_i \leq c_i$. Similarly, we have $\mathbf{t}^{\prime} = (t^{\prime}_i,\mathbf{t}^{\prime}_{-i})$, where $\mathbf{t}^{\prime}_{-i}$ denotes report profile of all agents except agent $i$. The space of $t^{\prime}_i$ is $\mathcal{T}^{\prime}_i = \mathcal{P}(n_i) \times (0,c_i]$. The space of $\mathbf{t}^{\prime}$ is represented by $\mathcal{T}^{\prime} = \times \mathcal{T}^{\prime}_{i}$, and the space of $\mathbf{t}^{\prime}_{-i}$ is represented by $\mathcal{T}^{\prime}_{-i} = \times_{j\ne i} \mathcal{T}^{\prime}_{j}$.
% where $\mathcal{C}_i$ is the space of agent $i$'s actual contributions. Similarly, the space of $\mathbf{t}^{\prime}$ is represented by $\mathcal{T}^{\prime} = \times \mathcal{T}^{\prime}_{i}$, and the space of $\mathbf{t}^{\prime}_{-i}$ is represented by $\mathcal{T}^{\prime}_{-i} = \times \mathcal{T}^{\prime}_{j}$ with $j \in N$ and $j \neq i$.

%%%%%%%%%%%%%%%%%%%%%%%%%%%%def：active network
\begin{definition}
Given a report profile $\mathbf{t}^{\prime}$, we say agent $i$ is active if there exists a sequence of agents $\{i_1,i_2,\ldots,i_k\}$, where $i_1 \in n_s, i \in n_{i_k}^{\prime}$ and $i_j \in n_{i_{j-1}}^{\prime}$ holds for any $1< j \leq k$.
\end{definition}
That is, an agent is an active agent if there is a ``diffusion path" from the sponsor to her. Note that only active agents are real participants of the crowdsourcing task. Based on the definition of active agents, we next introduce the concept of active network.

\begin{definition}
Given a report profile $\mathbf{t}^{\prime}$, we use $G(\mathbf{t}^{\prime}) = (V(\mathbf{t}^{\prime}),E(\mathbf{t}^{\prime}))$ (or $G^{\prime} = (V^{\prime},E^{\prime})$ for short) to denote the \textbf{active network} generated by $\mathbf{t}^{\prime}$, where $V^{\prime}$ is the set of all active agents and $E^{\prime} = \{(i,j) | (i \in V^{\prime}, j \in n_i^{\prime}) \vee (i = s, j \in n_s )\}$.
\end{definition}

%\begin{definition}
%Given the report profile $\mathbf{t}^{\prime}$ of all agents, we use $G(\mathbf{t}^{\prime}) = (V(\mathbf{t}^{\prime}),E(\mathbf{t}^{\prime}))$ (or $G^{\prime} = (V^{\prime},E^{\prime})$ for short) to denote the \textbf{active network} generated by $\mathbf{t}^{\prime}$, as $G(\mathbf{t}^{\prime}) = (V(\mathbf{t}^{\prime}),E(\mathbf{t}^{\prime}))$ (or $G^{\prime} = (V^{\prime},E^{\prime})$ for short), where $V^{\prime}$ consists of all nodes $i \in V$ which there exists a sequence of agents $\{i_1,i_2,\ldots,i_k\}$, where $i_1 \in n_s, i \in n_{i_k}^{\prime}$ and for any $1< j \leq k$ has $i_j \in n_{i_{j-1}}^{\prime}$ holds and $E^{\prime} = \{(i,j) | (i \in V^{\prime}, j \in n_i^{\prime}) \vee (i = s, j \in n_s )\}$.
%\end{definition}

The active network represents all agents that do participate in the task. Given any report profile $\mathbf{t}^{\prime}$, the sponsor only need to reward agents in the active networks.
%Obviously, for an agent $i$ to join our information propagation network, at least one information propagation path from the sponsor $s$ to $i$ is required. By contrast, if there does not exist a path from $s$ to $i$, then $i$ is not a participant in our active network. Following that, we formally define the reward distribution mechanism.

%%%%%%%%%%%%%%%%%%%%%%%%%%%%reward distribution mechanism
\begin{definition}
A \textbf{reward distribution mechanism} $M=(r_i)_{i\in N}$ on the social network consists of 
a set of reward functions, where $r_i:\mathcal{T}^{\prime} \to \mathbb{R}$ is the reward function for $i$ and $r_i(\mathbf{t}^{\prime})=0$ for an inactive agent $i$.
\end{definition}
Given any report profile $\mathbf{t}^{\prime}\in \mathcal{T}^{\prime}$, $r_i(\mathbf{t}^{\prime})$ outputs the reward to $i$. If an agent is not in the active network, her reward is always zero as she does not participate in the task and contributes nothing.
%the mechanism $M$ outputs a reward vector $\mathbf{r}(\mathbf{t}^{\prime}) = (r_1(\mathbf{t}^{\prime}),\ldots,r_n(\mathbf{t}^{\prime})) \in \mathbb{R}^n$ where $r_i(\mathbf{t}^{\prime})$ is the reward allocated to agent $i$. 
%\begin{definition}
%A \textbf{reward distribution mechanism} on the social network is denoted as $M:\mathcal{T}^{\prime} \to \mathbb{R}^n$. Given the vector of report profile $\mathbf{t}^{\prime}\in \mathcal{T}^{\prime}$, the mechanism $M$ outputs a reward vector $\mathbf{r}(\mathbf{t}^{\prime}) = (r_1(\mathbf{t}^{\prime}),\ldots,r_n(\mathbf{t}^{\prime})) \in \mathbb{R}^n$ where $r_i(\mathbf{t}^{\prime})$ is the reward allocated to agent $i$. 
%\end{definition}
When $\mathbf{t}^{\prime}$ is clear from the context, we write as $\mathbf{r}$ and $r_i$ for short. In the following, we define some desirable properties that a reward mechanism should satisfy. First, the reward mechanism should be individually rational, which guarantees that each participant is willing to stay in the mechanism.
%for an arbitrary agent, she cannot suffer losses by participating in the network, so she is willing to join the network, which is called individually rationality.

%%%%%%%%%%%%%%%%%%%%%%%%%%%%individually rational
\begin{definition}
A reward distribution mechanism $M$ is \textbf{individually rational} (IR) if $r_i(\mathbf{t}^{\prime}) \geq 0$ for all graph $G$, all $i \in N$ and all report profile $\mathbf{t}^{\prime} \in \mathcal{T}^{\prime}$.
% The mechanism is \textbf{strong individually rationality} (SIR) if the inequality sign holds strictly.
\end{definition}
If a reward mechanism is not individually rational, then in certain cases some participants will pay to the sponsor and the best reply is leaving the mechanism. Therefore, the individually rational property is also known as the participation constraint. 
%In a propagation network, individually rationality means that the agent is willing to join the network, which is not enough. 
Besides the IR property, the sponsor also expects an agent to authentically contribute all her abilities and invite all her children to the task. 
%This property is called incentive compatible.

%%%%%%%%%%%%%%%%%%%%%%%%%%%%incentive compatible
\begin{definition}
A reward distribution mechanism $M$ is \textbf{incentive compatible} (IC) if the following inequality
\begin{equation}
	r_i(t_i, \mathbf{t}^{\prime}_{-i}) \geq r_i(t^{\prime}_i, \mathbf{t}^{\prime}_{-i})
\end{equation}
holds for all graph $G$, all $i \in N$, all $t_i \in \mathcal{T}_i$, all $t^{\prime}_i \in\mathcal{T}^{\prime}_{i} $ and all $\mathbf{t}^{\prime}_{-i} \in \mathcal{T}^{\prime}_{-i}$.
% Similarly to define \textbf{strong incentive compatibility} (SIC) if the inequality is strictly holds when $t^{\prime}_i \neq t_i$.
\end{definition}

Incentive compatibility implies that diffusing the task information to all children and contributing all her efforts to the task is a dominant strategy for all agents.
%is to report her private information truthfully. 
As the sponsor is endowed with a fixed budget, the total rewards to agents are limited.
%In the meantime, the sponsor would like to keep the budget within a fixed limit.

%%%%%%%%%%%%%%%%%%%%%%%%%%%%weakly budget balanced and asymptotically budget balanced 
\begin{definition}
A reward distribution mechanism $M$ is \textbf{budget balanced} (BB) if
\begin{equation}
	\sum^n_{i=1}{r_i(\mathbf{t}^{\prime})}= B
\end{equation}
for all graph $G$,  all $i \in N$ and all report profile $\mathbf{t}^{\prime} \in \mathcal{T}^{\prime}$.
\end{definition}

%Sometimes we want to distribute the rewards as completely as possible. In traditional problem settings, asymptotically budget balanced requires that almost all of the sponsor's budget be distributed when the number of agents goes to infinity. However, there may be fake identities in our network, so we want the budget to be fully distributed when the total contribution goes to infinity, as the following definition shows. %The following definition implies that almost all of the sponsor's budget will be distributed when the total contribution goes to infinity.
%%%%%%%%%%%%%%%%%%%%%%%%%%%%asymptotically budget balanced 
\begin{definition}
A reward distribution mechanism $M$ is \textbf{asymptotically budget balanced} (ABB) if 
\begin{equation}
	\lim_{\sum_{i\in N} {c_i'} \to \infty} \sum_{i\in N}{r_i(\mathbf{t}^{\prime})} = B
	% \sum^n_{i=1}{r_i(\mathbf{t}^{\prime})} \leq B
\end{equation}
for all graph $G $,  all $i \in N$ and all report profile $\mathbf{t}^{\prime} \in \mathcal{T}^{\prime}$.
\end{definition}
The ABB property requires the sponsor's budget to be fully distributed to agents when the sum of all agents' contributions goes to infinity. If a reward mechanism is IR and IC, then agents are motivated to contribute all their capacities and propagate the task information to all their children. However, as the agents are individuals distributed in the network, they can easily create fake identities or even fake social networks to gain more reward. Such behaviors are called Sybil attack or false-name attack, and a good reward mechanism should prevent such kind of behavior. 
%A good reward mechanism should prevent such attacks
%in online social networks, and we cannot authenticate whether these identities are genuine. An improper reward distribution mechanism will result in more rewards for fakers and thus incentivize faking nodes' manipulations, called Sybil attacks (a.k.a false-name manipulations).
Next, we give a formal definition of Sybil attacks. 
%Usually, each agent $i$ can create a limited number of Sybil nodes. Since only the counterfeiter knows these Sybil nodes, these nodes can only be descendants of $i$. Meanwhile, $i$ and all Sybil nodes cannot propagate to agents that are not $i$'s children, and the total capacity of these nodes cannot be more than $i$'s capacity.
% should not be children of other real nodes. But Sybil nodes can propagate to each other and the real children of the counterfeiters. More formally, we define Sybil attacks as follows.

%%%%%%%%%%%%%%%%%%%%%%%%%%%%Sybil attack
\begin{definition}
A \textbf{Sybil attack} of agent $i$ is denoted by an attacking type report $a_i=(\nu_i, \tau_i) \in \mathcal{A}_i$, where $\nu_i=\{i,i_1,\ldots,i_m\}$ is a set of fake identities and accordingly $\tau_i=\{t^{\prime}_{i}, t^{\prime}_{i_1}, \ldots, t^{\prime}_{i_m}\}$ are their reports, where
\begin{itemize}
	\item $\sum_{j \in \nu_i}{c^{\prime}_{j}} \leq c_i$;
	\item $n_{j}^{\prime} \subseteq n_i \cup \nu_i$ for all $j\in \nu_i$.
\end{itemize}
%The total capacity satisfies $\sum_{j \in \nu_i}{c^{\prime}_{j}} \leq c_i$, and for each node $j \in \nu_i$, the report profile $t^{\prime}_{j} = (n_{j}^{\prime}, c^{\prime}_{j})$ satisfies
%\begin{equation*}
%    n_{j}^{\prime} \subseteq n_i \cup \nu_i
%\end{equation*}
\end{definition}
In other words, agent $i$ can create arbitrary number of fake identities and arbitrary social connections between these identities. Let us consider a special case of Sybil attack: all the fake nodes are invited by the inviters of node $i$.

%%%%%%%%%%%%%%%%%%%%%%%%%%%%Sybil attack
\begin{definition}
A \textbf{parallel Sybil attack} of agent $i$ is a special kind of Sybil attack, where $\nu_i=\{i,i_1,\ldots,i_m\}$ is a set of fake identities invited by the parents of $i$.
%The total capacity satisfies $\sum_{j \in \nu_i}{c^{\prime}_{j}} \leq c_i$, and for each node $j \in \nu_i$, the report profile $t^{\prime}_{j} = (n_{j}^{\prime}, c^{\prime}_{j})$ satisfies
%\begin{equation*}
%    n_{j}^{\prime} \subseteq n_i \cup \nu_i
%\end{equation*}
\end{definition}

A Parallel Sybil attack implies only fake in parallel, where the fake participants are all invited by at least one inviter of the agent committing the attack. With the definition of Sybil attacks, we intend to design reward mechanisms that can defend against Sybil attacks. 

%In other words, any agent that fails to obtain a higher reward for taking a Sybil attack, such an agent has no incentive to commit Sybil attacks.

%%%%%%%%%%%%%%%%%%%%%%%%%%%%Sybil-proof 
\begin{definition}
A reward distribution mechanism $M$ is \textbf{Sybil-proof} (SP), if the inequality
\begin{equation}
	\sum_{j\in \nu_i}{r_j(a_i, \mathbf{t}^{\prime}_{-i})} \leq  r_i(t_i, \mathbf{t}^{\prime}_{-i})
\end{equation}
holds for all graph $G$, all $i \in N$, all $t_i \in \mathcal{T}_i$, all $\mathbf{t}^{\prime}_{-i} \in \mathcal{T}^{\prime}_{-i}$ and $a_i \in \mathcal{A}_i$, where $(a_i, \mathbf{t}^{\prime}_{-i}) = (t^{\prime}_{i}, t^{\prime}_{i_1}, \ldots, t^{\prime}_{i_m}, \mathbf{t}^{\prime}_{-i})$ is the report profile of all agents under Sybil attack $a_i$. The mechanism is \textbf{parallel Sybil-proof} (PSP) if the Sybil attacks satisfy the situation of parallel Sybil attacks.
\end{definition}
The SP property may be too strong to be held, and we next introduce a mild condition for Sybil-proofness, called $\gamma$-SP.
%Sybil-proof is a very strong condition, and usually, taking Sybil attacks to get no more than a fixed proportion of the original reward is a more common property.
%%%%%%%%%%%%%%%%%%%%%%%%%%%gamma-Sybil-proof 
\begin{definition}
A reward distribution mechanism $M$ is  $\gamma$\textbf{-Sybil-proof} ($\gamma$-SP), if the inequality
\begin{equation}
	\sum_{j\in \nu_i}{r_j(a_i, \mathbf{t}^{\prime}_{-i})} \leq \gamma r_i(t_i, \mathbf{t}^{\prime}_{-i}) 
\end{equation}
holds for all graph $G = (V,E)$, all $i \in N$, all $t_i \in \mathcal{T}_i$, all $\mathbf{t}^{\prime}_{-i} \in \mathcal{T}^{\prime}_{-i}$ and $a_i \in \mathcal{A}_i$.
% $r_i() = r_i((t^{\prime}_{i}, t^{\prime}_{i_1}, \ldots t^{\prime}_{i_m}, \mathbf{t}^{\prime}_{-i}))$ $+$ $\sum_{j\in nu_i}{r_{j}((t^{\prime}_{i}, t^{\prime}_{i_1}, \ldots t^{\prime}_{i_m}, \mathbf{t}^{\prime}_{-i}))}$ is the reward obtained by $i$ and all the nodes she created. In particular, when $\gamma = 1$, the mechanism is \textbf{perfectly Sybil-proof} (PSP).
\end{definition}
In the following contents, we focus on designing reward mechanisms that satisfy IR, IC and other expected properties.
%The following section describes how to design a reward distribution mechanism that satisfies IR, IC, ABB and $\gamma$-SP properties.

%%%%%%%%%%%%%%%%%%%%%%%%%%%%%%%%%%%%%%%%%%%%%%%%%%%%%%%%%%%%%%%%%%%%%%%%

\section{Propagation Reward Distribution Mechanism}\label{section3}

This section introduces a novel reward distribution mechanism called \textit{Propagation Reward Distribution Mechanism} (PRDM). PRDM starts by layering a given network and then determines the final rewards for each agent by the contribution phase and propagation phase.

The goal of all agents is to get more rewards except that the sponsor wants to maximize the information propagation instead of receiving a reward. Sponsor $s$ will always diffuse the information to all the children. For a given report profile $\mathbf{t}^{\prime}$, we generate the active network $G(\mathbf{t}^{\prime}) = (V(\mathbf{t}^{\prime}),E(\mathbf{t}^{\prime}))$. In  $G^{\prime}$, define the depth of agent $i$ as the length of the shortest path from $s$ to $i$, written as $dep(i)$. Therefore, different agents can be divided into different layers based on their depths, and define the $k$-th layer $l_k = \{i \in V^{\prime} | dep(i) = k\}$ as the set of all agents with depth $k$.

Since we only allow information to be propagated from the previous layer to the next layer, for all $i \in l_k$, only the edges from agent $i$ to the agents in the $(k+1)$-th layer are retained. By the above processing, we construct a layered directed graph based on $\mathbf{t}^{\prime}$. Figure~\ref{fig:layer} shows an example of how to get the corresponding layered graph from an active network. In the obtained layered graph, for any $i \in l_k$, define $p_i$ as the set of all parents of $i$ in $(k-1)$-th layer.

%%%%%%%%%%%%%%%%%%%%%%%%%%%%fig:layer
\begin{figure}[h]
\centering
\includegraphics[width=0.8\linewidth]{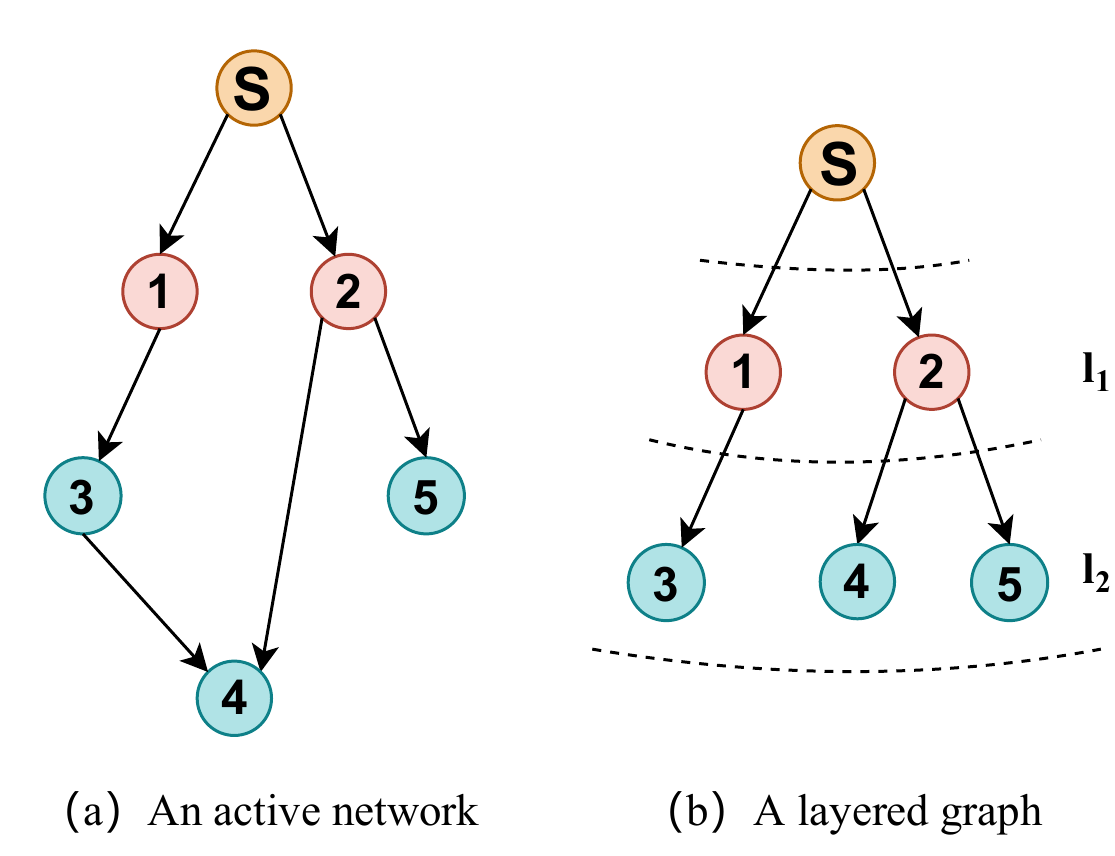}
\caption{An example of transforming an active network (a) into a layered graph (b).}
\label{fig:layer}
% \Description{An example of trans  forming an active network (a) into a layered graph (b).}
\end{figure}

%%%%%%%%%%%%%%%%%%%%%%%%%%%%mechanism
\begin{algorithm}
\caption{\mbox{Propagation Reward Distribution Mechanism}}\label{mechanism}
\KwIn{A report profile $\mathbf{t}^{\prime}$, a fixed budget $B$ and parameters $c_s > 0$ and $\beta \in [0, 1/2]$}
Construct the active network $G(\mathbf{t}^{\prime}) = (V(\mathbf{t}^{\prime}), E(\mathbf{t}^{\prime}))$\;
Compute the depth of each agent who is on the graph $G(\mathbf{t}^{\prime})$ to obtain the layer sets $l_1,l_2,\ldots,l_d$\;
For $k = 1,2,\ldots,d$, let $C_k^{\prime} =c_s + \sum_{i \in V(\mathbf{t}^{\prime}), dep(i) \leq k}{c_i^{\prime}}$ be the total contribution of $s$ and layer $l_1,l_2,\ldots,l_k$\;
% $, where $i \in V(\mathbf{t}^{\prime})$ and $dep(i) \leq k$ 
\textit{\textbf{Contribution phase}}: Initialize each agent's weight $w_i=0$ for $i \in N$, and the initial budget of the first layer is $B_1 = B$\;
\For{$k = 1,2,\ldots,d$}{
	\For{each agent $i \in l_k$}{
		$w_i = \frac{c_i^{\prime}}{C_k^{\prime}}B_k$\;
	}
	$B_{k+1} = B_{k} - \sum_{i \in l_k}{w_i}$\;
}
\textit{\textbf{Propagation phase}}: Initialize each agent’s reward $r_i = w_i$ for all $i \in l_1$, and $r_i = (1-\beta) w_i$ for $i \in N \setminus l_1$\;
\For{$k = 2,3,\ldots,d$}{
	\For{each agent $i \in l_k$}{
		\For{each agent $j \in p_i$}{
			$r_j = r_j + \frac{c_j^{\prime}}{\sum_{m \in p_i}{c_m^{\prime}}} \beta w_i$\;
		}
	}
}
\KwOut{the reward vector $\mathbf{r}(\mathbf{t}^{\prime})$}
\end{algorithm}

PRDM is divided into a \textit{contribution phase} and a \textit{propagation phase}. In the contribution phase, the corresponding weight is determined by each agent's depth and contribution. In the propagation phase, the weight is redistributed according to agents' propagation and output agents' final reward. In PRDM, the parameter $c_s$ is a virtual capacity of the sponsor, which is utilized to deliver the budget to the following layers. The parameter $\beta$ measures what proportion of the rewards an agent gives her invitees. With the above definitions, the general procedure of PRDM is shown in Algorithm~\ref{mechanism}.

%%%%%%%%%%%%%%%%%%%%%%%%%%%%fig:PRDM
\begin{figure*}[h]
\centering
\includegraphics[width=.82\linewidth]{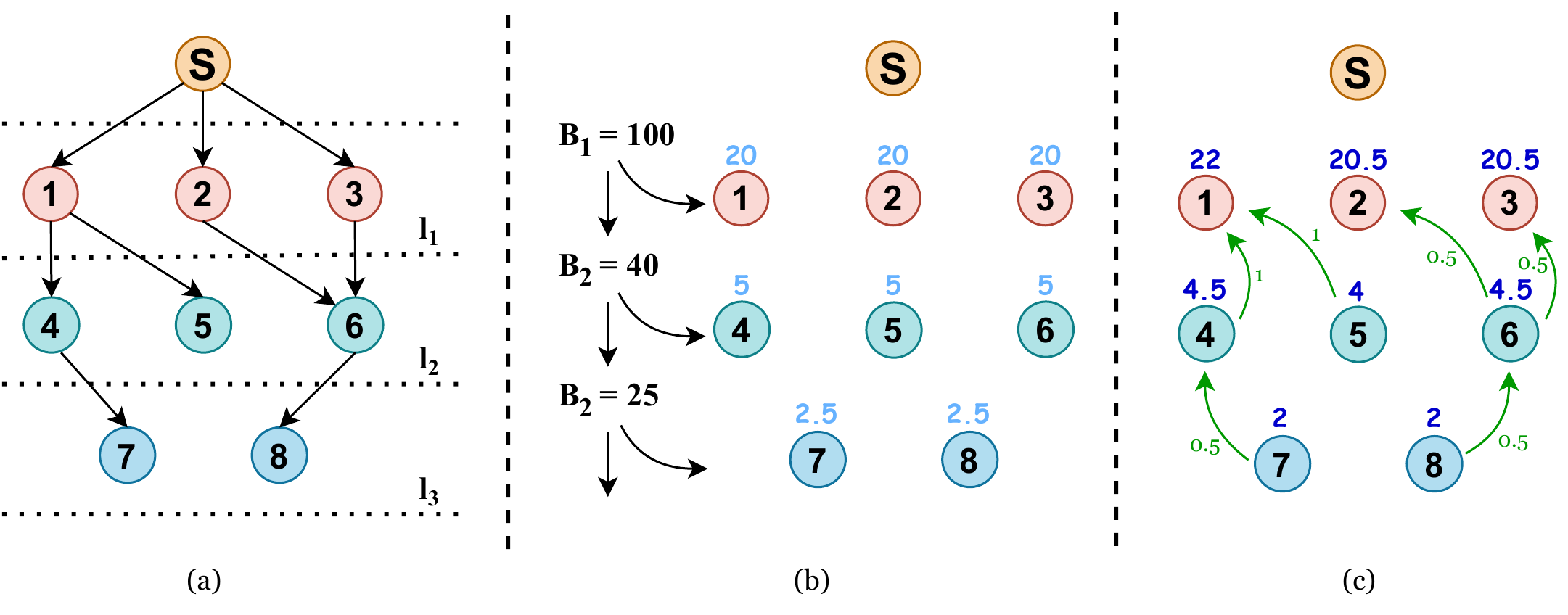}
\caption{An example of PRDM on input $B=100$, $c_s=20$, $\beta=0.2$, each agent has a contribution of $10$. (a) the invitation relationship among the sponsor and each agent. (b) each layer's initial budget $B_k$ and each agent's weight $w_i$ in contribution phase. (c) the transfer of reward during propagation phase and each agent's final reward $r_i$.}
% The invitation relationship between the sponsor and each agent is shown in this figure. $B = 100$,  $\beta = 0.2$, all agents report a contribution of $10$ and sponsor $s$ has a virtual capacity of $c_s = 20$.}
\label{fig:example}
% \Description{An example of PRDM on input $B=100$, $c_s=20$, $\beta=0.2$, each agent has a contribution of $10$. (a) the invitation relationship among the sponsor and every agent. (b) each layer's initial budget $B_k$ and each agent's weight $w_i$ in contribution phase. (c) the transfer of reward during propagation phase and each agent's final reward $r_i$.}
\end{figure*}

\subsection{An Example of PRDM}\label{an_example}

In this subsection, we show an example of the mechanism in operation. An instance is shown in Figure~\ref{fig:example} to give an illustration of PRDM. The sponsor transmits the information to the first layer $l_1 = \{1,2,3\}$. After that, $l_2=\{4,5,6\}$ and $l_3=\{7,8\}$. The invitation relationships among all the agents are presented in Figure~\ref{fig:example}(a).

Assuming a budget $B = 100$, we set $\beta = 0.2$ and $c_s = 20$, all agents report a contribution of $10$. The process of distributing rewards using PRDM is as follows.

\textbf{Contribution phase:}
\begin{itemize}
% \begin{itemize}
\item \textbf{Step 1:} $C_1^{\prime}$ is the total contribution of sponsor $s$ and agents $1$, $2$, and $3$. We can calculate $C_1^{\prime} = 20 + 3*10 = 50$ and the budget $B_1 = B =100$, so that each of them has weight
\begin{equation*}
	w_1 = w_2 = w_3 = \frac{10}{50}*100 = 20
\end{equation*}
\item \textbf{Step 2:} Calculate the budget $B_2 = B_1 - w_1 - w_2 - w_3 = 40$ and $C_2^{\prime} = C_1^{\prime} + 3*10 = 80$. Then we obtain the weight of the agent $4$, $5$, and $6$ as
\begin{equation*}
	w_4 = w_5 = w_6 = \frac{10}{80}*40 = 5
\end{equation*} 
\item \textbf{Step 3:} Similarly, $B_3 = B_2 - w_4 - w_5 - w_6 = 25$, $C_3^{\prime} = C_2^{\prime} + 2*10 = 100$, so the weight of agents $7$ and $8$ is
\begin{equation*}
	w_7 = w_8 = \frac{10}{100}*25 = 2.5
\end{equation*}
% \end{itemize}
\end{itemize}
\textbf{Propagation phase:}
\begin{itemize}
% \begin{itemize}
\item \textbf{Step 4:} The initial reward for agents is the weight calculated in the contribution phase
\begin{align*}
	&r_1 = r_2 = r_3 = 20;\\
	&r_4 = r_5 = r_6 = (1-\beta)*5 = 4;\\
	&r_7 = r_8 = (1-\beta)*2.5 = 2 
\end{align*}

\item \textbf{Step 5:} Agent $4$ and agent $5$ transfer $0.2$ of their weights to agent $1$ respectively as rewards; agent $6$ transfers $\frac{\beta}{2} = \frac{0.2}{2} = 0.1$ of her weights to agent $2$ and agent $3$
\begin{align*}
	\dashrightarrow&r_1 = r_1 + \beta*w_4 = 21;\\
	\dashrightarrow&r_1 = r_1 + \beta*w_5 = 22;\\
	\dashrightarrow&r_2 = r_2 + \beta/2*w_6 = 20.5,\\
	&r_3 = r_3 + \beta/2*w_6 = 20.5
\end{align*}
\item \textbf{Step 6:} Similarly, we consider the transfer of agent $7$ and agent $8$
\begin{align*}
	\dashrightarrow&r_4 = r_4 + \beta*w_7 = 4.5;\\
	\dashrightarrow&r_6 = r_6 + \beta*w_8 = 4.5
\end{align*}
% \end{itemize}
\end{itemize}

The final reward is $\mathbf{r} = (22, 20.5, 20.5, 4.5, 4, 4.5, 2, 2)$ according to PRDM. Each component of $\mathbf{r}$ represents the reward of the corresponding agent. Note that we still have $B_4=B_3 - w_7 - w_8 = 20$ available for further propagation.
% ting that if the propagation continues, it still has $B_4=B_3 - w_7 - w_8 = 20$ available for allocation.

%%%%%%%%%%%%%%%%%%%%%%%%%%%%%%%%%%%%%%%%%%%%%%%%%%%%%%%%%%%%%%%%%%%%%%%%

\section{Properties of PRDM}\label{section4}

In this section, we show several properties of PRDM. We start by discussing the straightforward properties of PRDM, and then we illustrate how PRDM maximizes information propagation and defends against Sybil attacks.

For the convenience contents of the following formulation, denote $C_{S}^{\prime}$ as the sum of the contributions of the set $S$, e.g., $C_{l_k}^{\prime}$ is the total contribution of $k$-th layer. Recall that when $k$ is an integer, $C_k'$ denotes the total contribution of the first $k$ layers.

%%%%%%%%%%%%%%%%%%%%%%%%%%%%the:ABB
\begin{theorem}
The Propagation Reward Distribution Mechanism is asymptotically budget balanced.
\label{the:ABB}
\end{theorem}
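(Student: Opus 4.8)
The plan is to track the total reward distributed by PRDM and show it tends to $B$ as $\sum_{i\in N} c_i' \to \infty$. First I would observe that the propagation phase is a pure redistribution: every transfer $\frac{c_j'}{\sum_{m\in p_i} c_m'}\beta w_i$ moved from agent $i$'s allocation to her parents is exactly compensated by the $\beta w_i$ withheld from $i$ (since $r_i$ is initialized at $(1-\beta)w_i$ rather than $w_i$ for $i\notin l_1$, and the parent shares sum to $\beta w_i$). Hence $\sum_{i\in N} r_i(\mathbf{t}') = \sum_{i\in N} w_i$, and it suffices to analyze the total weight assigned in the contribution phase.

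Next I would unwind the recursion for $B_k$. By construction, for each layer $k$ we have $\sum_{i\in l_k} w_i = \frac{C_{l_k}'}{C_k'} B_k$ (using $C_{l_k}' = \sum_{i\in l_k} c_i'$ and $C_k' = c_s + \sum_{\mathrm{dep}(i)\le k} c_i'$), so $B_{k+1} = B_k\left(1 - \frac{C_{l_k}'}{C_k'}\right) = B_k\cdot\frac{C_{k-1}'}{C_k'}$, where I set $C_0' = c_s$. Telescoping from $B_1 = B$ gives $B_{k+1} = B\cdot\frac{C_0'}{C_k'} = B\cdot\frac{c_s}{C_k'}$. Therefore the total distributed weight is
\begin{equation*}
\sum_{i\in N} w_i = \sum_{k=1}^{d} \frac{C_{l_k}'}{C_k'} B_k = B - B_{d+1} = B - \frac{B\, c_s}{C_d'} = B\left(1 - \frac{c_s}{C_d'}\right),
\end{equation*}
using the telescoping identity $\sum_{k=1}^d (B_k - B_{k+1}) = B_1 - B_{d+1}$. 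Since $C_d' = c_s + \sum_{i\in V'} c_i' \ge c_s + \sum_{i\in N} c_i'$ does not hold in general (only active agents contribute), I would instead note that when $\sum_{i\in N} c_i' \to \infty$ with all agents active — or more carefully, frame the limit so that $C_d'\to\infty$ — the fraction $\frac{c_s}{C_d'} \to 0$, hence $\sum_{i\in N} r_i(\mathbf{t}') = B(1 - c_s/C_d') \to B$.

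The main obstacle, and the step requiring the most care, is the interaction between "active" and "reported" contributions: the ABB definition takes the limit over $\sum_{i\in N} c_i' \to \infty$, but only active agents appear in $C_d'$, so I need to argue that inactive agents carry zero reward (immediate from the mechanism) and that the relevant quantity $C_d'$ genuinely grows — i.e., address whether the limit is over active contributions or requires an assumption that agents are active. I expect the resolution is simply that $C_d' = c_s + \sum_{i\in V'} c_i'$ and the theorem should be read with the sum ranging over the active network (equivalently, reward to inactive agents is $0$ so they are irrelevant); the telescoping computation above then delivers the result cleanly. A secondary check is the $\beta$-redistribution bookkeeping in the propagation phase, which should be stated as a short lemma-style observation that total reward is invariant under that phase.
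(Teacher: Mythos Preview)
Your proposal is correct and follows essentially the same route as the paper: observe that the propagation phase only redistributes weight so that $\sum_i r_i=\sum_i w_i$, derive the closed form $B_{k+1}=\frac{c_s}{C_k'}B$ by telescoping the contribution-phase recursion, and conclude $\sum_i r_i=B-B_{d+1}\to B$ as $C_d'\to\infty$. Your discussion of the active-versus-reported contribution subtlety is in fact more careful than the paper's own proof, which simply identifies $C_d'$ with $\sum_{i=1}^n c_i'$ without comment.
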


\begin{proof}
In PRDM, the division of the initial budget $B$ is performed only in the contribution phase, which implies $\sum_{i\in N}{r_i}=\sum_{i\in N}{w_i}$. Recall that for an active network $G^{\prime} = (V^{\prime}, E^{\prime})$, the sponsor $s$ has a virtual contribution $c_s > 0 $ and $C_k^{\prime} =c_s + \sum_{i \in V(\mathbf{t}^{\prime}), dep(i) \leq k}{c_i^{\prime}}$ is the total contribution of $s$ and all the agents in layer $l_1, \ldots, l_k$. 

According to PRDM, each layer can only divide a part of the remaining reward from the previous layer. Suppose that there are $d$ layers. We focus on $B_k$, which is the residual budget of layer $l_k$ inherited from the upper layer. Generally, for $k=1,\ldots,d-1$, we have $B_{k+1}=B_k-\sum_{i\in l_k}{w_i}$. Specially, let $B_{d+1}=B_d-\sum_{i\in l_d}{w_i}$ be the budget that has not been distributed. Then, we can infer that

% In PRDM, the division of the initial budget B is performed only in the contribution phase for a social network $G = (V, E)$($V = \{s\} \cup N, N = \{1, \ldots, n\}$). Sponsor $B$ has a virtual contribution $c_s > 0 $ ($s$ receives no reward) and $C_k^{\prime} =c_s + \sum_{i \in V(\mathbf{t}^{\prime}), dep(i) \leq k}{c_i^{\prime}}$ is the total contribution of $s$ and all the agents in layer $l_1, \ldots, l_k$. According to our allocation, each layer can only divide a part of the remaining reward from the previous layer. Suppose that a layered graph is divided into $d$ layers and $B$ is the residual budget each layer inherits from the upper layer, and the sum of rewards for all agents is

\begin{align*}
\sum_{i = 1}^{n}{r_i} &= \sum_{i = 1}^{n}{w_i} =\sum_{k = 1}^{d}{\sum_{i \in l_k}{w_i}}\\
&=\sum_{k = 1}^{d}{(B_k - B_{k + 1})} = B - B_{d + 1}
\end{align*}

Next, we show that $B_{d+1}$ converges to 0 when the total contribution goes to infinity. Starting from the first layer, we can get
\begin{align*}
B_1=&\ B \\
B_2=&\ B_1-\sum_{i\in l_1}{w_i} = B_1-\sum_{i\in l_1}{\frac{c_i^{\prime}}{C_1^{\prime}}} B_1 =\frac{c_s}{C_1^{\prime}}B \\
B_3=&\ B_2-\sum_{i\in l_2}{w_i} = B_2-\sum_{i\in l_2}{\frac{c_i^{\prime}}{C_2^{\prime}}} B_2 =\frac{c_s}{C_2^{\prime}}B \\
\end{align*}
Similarly, for $k=2,\ldots,d$, we have $B_k=\frac{c_s}{C_{k-1}^{\prime}}B$. Then, when the total contribution goes to infinity, $C_d^{\prime}=\sum_{i=1}^n {c_i^{\prime}}\to \infty$, hence $B_{d+1}=\frac{c_s}{C_{d}^{\prime}}B\to 0$.

% where $B_{k + 1} = B_k - \sum_{i \in l_k}{w_i} = \frac{(C_k^{\prime}-c_s)}{C_k^{\prime}}B_k > 0$.
\end{proof}

The above theorem indicates that PRDM will allocate all of the sponsor's budget to the agents when the total contribution is large enough. Meanwhile, the sponsor does not need to pay extra budgets for the contributions of extra participants.

% The number of layers $d$ in the proof of theorem~\ref{the:ABB} can be chosen as an arbitrarily large positive integer, which implies that in the mechanism, we will always have a residual reward to allocate to potential agents, incentivizing them to propagate the information. Of course, when the number of agents is large enough, the remaining reward will asymptotically converge to $0$.

%%%%%%%%%%%%%%%%%%%%%%%%%%%%the:IR
\begin{theorem}
The Propagation Reward Distribution Mechanism is individually rational.
\label{the:IR}
\end{theorem}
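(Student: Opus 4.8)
The plan is to walk through the two phases of PRDM and check that no agent's reward is ever negative. An inactive agent receives reward $0$ by definition of a reward distribution mechanism, so it suffices to treat active agents, i.e., those lying in some layer $l_k$ of the active network $G^{\prime}$.

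The key step is the \emph{contribution phase}: I will show every residual budget $B_k$ is non-negative, which at once gives $w_i = \frac{c_i^{\prime}}{C_k^{\prime}}B_k \ge 0$ for $i \in l_k$, since $c_i^{\prime} > 0$ and $C_k^{\prime} \ge c_s > 0$. I would prove $B_k \ge 0$ by induction on $k$. The base case is $B_1 = B \ge 0$. For the step, compute
\begin{equation*}
B_{k+1} = B_k - \sum_{i \in l_k} w_i = B_k - \frac{C_{l_k}^{\prime}}{C_k^{\prime}} B_k = B_k \cdot \frac{C_k^{\prime} - C_{l_k}^{\prime}}{C_k^{\prime}},
\end{equation*}
and observe that $C_k^{\prime} - C_{l_k}^{\prime} = c_s + \sum_{dep(i) \le k-1} c_i^{\prime} = C_{k-1}^{\prime}$, so $0 \le C_k^{\prime} - C_{l_k}^{\prime} \le C_k^{\prime}$; hence $B_{k+1}$ is a non-negative multiple of $B_k$ and the induction closes. (This is exactly the recursion $B_{k+1} = \frac{C_{k-1}^{\prime}}{C_k^{\prime}} B_k$ already established in the proof of Theorem~\ref{the:ABB}.)

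For the \emph{propagation phase}, I would note that rewards are only ever increased. Each reward is initialized to $w_i$ (for $i \in l_1$) or $(1-\beta) w_i$ (otherwise), both non-negative because $w_i \ge 0$ and $\beta \in [0,1/2]$ forces $1-\beta \ge 1/2 > 0$. The only subsequent operation is adding $\frac{c_j^{\prime}}{\sum_{m \in p_i} c_m^{\prime}} \beta w_i$ to some $r_j$, and this increment is non-negative ($c_j^{\prime}, \sum_{m} c_m^{\prime} > 0$, $\beta \ge 0$, $w_i \ge 0$). Therefore the final reward of every agent is at least its non-negative initial value, which yields $r_i(\mathbf{t}^{\prime}) \ge 0$ for all graphs $G$, all $i$, and all $\mathbf{t}^{\prime}$.

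There is no serious obstacle here: the whole argument hinges on the single observation that the strictly positive virtual capacity $c_s$ keeps $C_k^{\prime} - C_{l_k}^{\prime}$ from exceeding $C_k^{\prime}$, so the residual budget never overshoots below zero; everything else is bookkeeping. The only point to state carefully is the induction on $B_k$ and the elementary inequalities it uses.
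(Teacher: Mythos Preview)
Your proof is correct and follows essentially the same approach as the paper, which dispatches the claim in a single sentence by observing that ``at any stage of PRDM, [an agent] does not need to pay a fee, so $r_i \ge 0$.'' You have supplied the careful verification of that observation --- the induction on $B_k$ and the sign bookkeeping in the propagation phase --- that the paper leaves implicit.
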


\begin{proof}
Intuitively, any agent $i$ in a social network $G$, at any stage of PRDM, does not need to pay a fee, so $r_i \geq 0$ holds.

\end{proof}

Actually, for any agent $i \in G(\mathbf{t}^{\prime})$ of the active network, they always have a positive reward $r_i > 0$. Furthermore, Theorem~\ref{the:IC} shows that an agent maximize the reward when she truthfully report her type. 

%%%%%%%%%%%%%%%%%%%%%%%%%%%%the:IC
\begin{theorem}
The Propagation Reward Distribution Mechanism is incentive compatible.
\label{the:IC}
\end{theorem}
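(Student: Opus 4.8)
The plan is to prove incentive compatibility by showing that a unilateral deviation by agent $i$ — either under-reporting her capacity ($c_i' < c_i$) or hiding some children ($n_i' \subsetneq n_i$) — can only weakly decrease her total reward, holding all other reports $\mathbf{t}'_{-i}$ fixed. I would decompose $r_i$ into its two additive components: the \emph{contribution part} $\rho_i := r_i$ contributed in the propagation-phase initialization (namely $w_i$ if $i \in l_1$ and $(1-\beta)w_i$ otherwise), and the \emph{propagation part} $\sigma_i := \sum_{k : \exists\, j \in l_k, i \in p_j} \frac{c_i'}{\sum_{m \in p_j} c_m'}\beta w_j$ received from $i$'s children in the layered graph. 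The key structural fact I would establish first, reusing the computation from the proof of Theorem~\ref{the:ABB}, is the closed form $B_k = \frac{c_s}{C_{k-1}'}B$ for $k \ge 2$ and hence $w_i = \frac{c_i'}{C_k'} B_k = \frac{c_i' c_s B}{C_{k-1}' C_k'}$ for $i \in l_k$, $k\ge 2$ (and $w_i = \frac{c_i'}{C_1'}B$ for $i \in l_1$). This makes every weight an explicit function of the capacities, so the effect of changing $c_i'$ is transparent.

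Next I would handle the two deviation dimensions separately. For capacity: fix the reported children $n_i' = n_i$ and the depth of $i$ (which only depends on $\mathbf{t}'_{-i}$ and on $i$ reporting a nonempty... actually on $i$ being active, which is unaffected), and argue monotonicity in $c_i'$. Increasing $c_i'$ increases $C_k'$ for all layers $k \ge dep(i)$, which would seem to hurt downstream agents, but for agent $i$ herself: $w_i$ is increasing in $c_i'$ — from $w_i = \frac{c_i' c_s B}{C_{k-1}'C_k'}$ and $C_k' = C_{k-1}' + C_{l_k}'$, writing $x = c_i'$ we get $w_i = \frac{c_s B\, x}{C_{k-1}'(C_{k-1}' + A + x)}$ where $A = C_{l_k}' - c_i' \ge 0$ and $C_{k-1}'$ are constants in $x$; this is increasing in $x$. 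Also $\rho_i$ is a fixed positive multiple of $w_i$, and the propagation part $\sigma_i$ is increasing in $c_i'$ since for each child $j$ the coefficient $\frac{c_i'}{\sum_{m\in p_j}c_m'}$ is increasing in $c_i'$ and each $w_j$ (with $j$ in a layer below $i$) — one must check this — is also non-decreasing in $c_i'$: here I'd need the fact that raising $c_i'$ raises every $C_\ell'$ for $\ell \ge dep(i)$ proportionally in a way that, after telescoping $w_j = \frac{c_j' c_s B}{C_{\ell-1}'C_\ell'}$, the product $C_{\ell-1}'C_\ell'$ in the denominator... this is the delicate point, see below. So truthful $c_i = c_i'$ maximizes over $c_i' \in (0,c_i]$.

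For the propagation dimension: reporting a superset of children is infeasible, so I only consider $n_i' \subsetneq n_i$. Removing a child $j$ from $n_i'$ has two effects: (i) $j$ (and possibly a whole subtree rooted through $j$, if $j$ becomes inactive or moves to a deeper layer) is removed from or demoted in the layered graph, which can only weakly decrease the $C_\ell'$ of the affected lower layers — this weakly \emph{increases} some later weights but does not touch $w_i$ or any layer $\le dep(i)$; and (ii) $i$ loses the propagation payment $\frac{c_i'}{\sum_{m\in p_j}c_m'}\beta w_j$ she would have received from $j$ (and, more subtly, from $j$'s descendants via $j$, though those payments go to $j$ not to $i$, so only the direct term matters for $i$). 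Since $\beta \ge 0$ and $w_j > 0$, dropping a child can only weakly lower $\sigma_i$ while $\rho_i$ is unchanged, except that we must rule out an indirect benefit: could removing child $j$ ever \emph{raise} $w_i$ or raise the weight of some other child $j'$ of $i$? Removing $j$ only decreases downstream $C_\ell'$ for $\ell \ge dep(i)+1$, which \emph{increases} $w_{j'}$ for another child $j' \in l_{dep(i)+1}$ of $i$ (since $w_{j'} = \frac{c_{j'}'c_sB}{C_{dep(i)}'C_{dep(i)+1}'}$ and $C_{dep(i)+1}'$ shrinks). So there \emph{is} a potential perverse incentive here, and resolving it — showing the gain on retained children never outweighs the loss of the dropped child's payment — is the crux; I expect this requires exploiting $\beta \le 1/2$ together with the exact telescoping form of the $w$'s and a careful term-by-term comparison. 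I would carry out the capacity argument first (cleaner, via the explicit formula and a single-variable monotonicity check), then the child-dropping argument, isolating the one child being dropped and bounding the redistribution gain to the siblings against the lost $\beta w_j$ term; the main obstacle is precisely this last inequality, where the constraint $\beta \in [0,1/2]$ must be used essentially.
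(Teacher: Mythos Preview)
Your decomposition $r_i = \rho_i + \sigma_i$ and the plan to show each piece is separately monotone in $c_i'$ has a genuine gap on the capacity side. The claim that ``each $w_j$ \ldots\ is also non-decreasing in $c_i'$'' is false: for a child $j\in l_{k+1}$ we have $w_j = \dfrac{c_j'\,c_s B}{C_k'\,C_{k+1}'}$, and raising $c_i'$ raises both $C_k'$ and $C_{k+1}'$, so $w_j$ is strictly \emph{decreasing} in $c_i'$. Consequently the propagation part $\sigma_i$ can decrease as $c_i'$ grows (take the extreme case where $i$ is alone in layer $k$ and is the sole parent of every child, so the coefficient $c_i'/\sum_m c_m'$ is identically $1$ and $\sigma_i=\beta\sum_j w_j$ decreases). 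The two pieces therefore cannot be handled independently; you must combine them. The paper does exactly this: it writes $r_i$ as a single expression in $c_i'$ (their Equation~(\ref{eq:IC1})), passes to the worst case $C_{l_k\setminus\{i\}}'=0$, $C_{l_{k+1}\cap n_i'}=C_{l_{k+1}}'$, $\beta=\tfrac12$, and shows the resulting function (their Equation~(\ref{eq:IC2})) is increasing. In particular, the constraint $\beta\le\tfrac12$ is used \emph{here}, in the capacity argument, not in the child-dropping argument where you expected it.

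On the child-dropping side you are over-anticipating difficulty. Your worry that deleting a child $j$ inflates the weights $w_{j'}$ of the retained children is legitimate, but it is already absorbed in a one-line comparison: with $i\in l_k$ as sole parent (the worst case for $i$), the propagation part before and after adding $j$ is $\beta B_{k+1}\cdot \dfrac{C'_{l_{k+1}\cap n_i'}}{C'_{k+1}}$ versus $\beta B_{k+1}\cdot \dfrac{C'_{l_{k+1}\cap n_i'}+c_j'}{C'_{k+1}+c_j'}$, and since $C'_{l_{k+1}\cap n_i'}\le C'_{k+1}$ the latter is larger for any $\beta\ge 0$. No use of $\beta\le\tfrac12$ is needed. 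So the ``crux'' you identify for Part~2 dissolves, while the real crux --- the interaction between the increasing $\rho_i$ and the decreasing $\sigma_i$ in Part~1, resolved only via $\beta\le\tfrac12$ --- is the step your plan does not yet handle.
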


\begin{proof}
By the definition of incentive compatible, PRDM needs to satisfy that for any agent $i \in N$, for any report profile $\mathbf{t}^{\prime}_{-i}$ of others, truthfully reporting her private type $t_i$ is a dominant strategy. The report $t^{\prime}_{i}$ of agent $i$ consists of the contributions $c_i^{\prime}$ and the set of children $n_i^{\prime}$. Hence for any agent $i \in N$, we need to prove the following two parts
\begin{itemize}
\item Agent $i$ contributes as much as she is capable $c_i^{\prime} =  c_i$ to maximize her reward.
\item Agent $i$ invites all her children $n_i^{\prime} = n_i$ to maximize her reward.
\end{itemize}

\textbf{Part 1:} if agent $i$ is not in the active network $G(\mathbf{t}^{\prime}) = (V(\mathbf{t}^{\prime}),$
$E(\mathbf{t}^{\prime}))$, the reward is zero regardless of how much she contributes, so $c_i^{\prime} =  c_i$ maximizes her reward. For any $i \in V(\mathbf{t}^{\prime})$, assume that agent $i$ is in the $k$-th layer ($1 < k < d$) in the layered graph with $d$ layers and agent $i$ is the only parent of her children in $(k+1)$-th layer. Thus for any $0< c_i^{\prime} \leq c_i$, any $n_i^{\prime} \subseteq n_i$ and $0 \leq \beta \leq \frac{1}{2}$, we have
\begin{equation}
\resizebox{.91\linewidth}{!}{$
	\begin{aligned}
		&r_i(t^{\prime}_i, \mathbf{t}^{\prime}_{-i}) = (1 - \beta) \frac{c_i^{\prime}}{C_{k - 1}^{\prime} + c_i^{\prime} + C_{l_k \setminus \{i\}}^{\prime}} B_k\\
		&+ \beta \frac{C_{l_{k+1} \cap n_i^{\prime}}^{\prime}}{C_{k - 1}^{\prime} + c_i^{\prime} + C_{l_k \setminus \{i\}}^{\prime} + C_{l_{k+1}}^{\prime}} \frac{C_{k - 1}^{\prime}}{C_{k - 1}^{\prime} + c_i^{\prime} + C_{l_k \setminus \{i\}}^{\prime}} B_k
	\end{aligned}
	$}
\label{eq:IC1}
\end{equation}
where $C_{l_k \setminus \{i\}}^{\prime}$ is the total contribution in $k$-th layer except $i$, $C_{l_{k+1} \cap n_i^{\prime}}$ is the total contribution of $i$'s children in $(k+1)$-th layer. The first term of $r_i(t^{\prime}_i, \mathbf{t}^{\prime}_{-i})$ in Equation~(\ref{eq:IC1}) is the reward reserved by $i$. The second term is the reward coming from the next layer. All quantities except $c_i^{\prime}$ are fixed, so the first term increases as $c_i^{\prime}$ increases and the second term decreases as $c_i^{\prime}$  increases. Consider the worst case: $C_{l_k \setminus \{i\}}^{\prime} = 0$, $C_{l_{k+1} \cap n_i^{\prime}} = C_{l_{k+1}}$, $\beta = \frac{1}{2}$ when the first term  decreases the fastest while the second term increases the slowest, $r_i(t^{\prime}_i, \mathbf{t}^{\prime}_{-i})$ can be reduced as

% \begin{small}
\begin{align}
	&r_i(t^{\prime}_i, \mathbf{t}^{\prime}_{-i}) \notag\\
	=& \frac{1}{2} \frac{1}{C_{k - 1}^{\prime}} \left(c_i^{\prime} + \frac{C_{k - 1}^{\prime} C_{l_{k+1}}^{\prime}}{C_{k - 1}^{\prime} + c_i^{\prime} + C_{l_{k+1}}^{\prime}} \right)B_k \notag\\
	=& \frac{1}{2} \frac{1}{C_{k - 1}^{\prime} + c_i^{\prime}} \frac{c_i^{\prime} C_{k - 1}^{\prime} + c_i^{\prime} c_i^{\prime} + c_i^{\prime} C_{l_{k+1}}^{\prime} + C_{k - 1}^{\prime} C_{l_{k+1}}^{\prime}}{C_{k - 1}^{\prime} + c_i^{\prime} + C_{l_{k+1}}^{\prime}} B_k \notag\\
	=& \frac{1}{2} \frac{(C_{k - 1}^{\prime} + c_i^{\prime})(c_i^{\prime} + C_{l_{k+1}}^{\prime})}{(C_{k - 1}^{\prime} + c_i^{\prime})(C_{k - 1}^{\prime} + c_i^{\prime} + C_{l_{k+1}}^{\prime})} B_k \notag\\
	=& \frac{1}{2} \frac{c_i^{\prime} + C_{l_{k+1}}^{\prime}}{C_{k - 1}^{\prime} + c_i^{\prime} + C_{l_{k+1}}^{\prime}} B_k \label{eq:IC2}
\end{align}
% \end{small}

Since $r_i(t^{\prime}_i, \mathbf{t}^{\prime}_{-i})$ is a monotonically increasing function of $c_i^{\prime}$, agent $i$ receives the highest reward when $c_i^{\prime} = c_i$. Furthermore, if $k = 1$, agent $i$ is in the first layer and is not required to distribute rewards to the previous layer, the first term in Equation~(\ref{eq:IC1}) will be larger. If $k = d$, agent $i$ is in the last layer and has no rewards from the next layer, so the second term in Equation~(\ref{eq:IC1}) is $0$. If agent $i$ is not the only parent of her children in $(k+1)$-th layer, the second term in the equation ~(\ref{eq:IC1}) decreases more slowly. All of these cases will be better than the worst case we discussed in Equation~(\ref{eq:IC2}). Therefore $c_i^{\prime} = c_i$ maximizes the reward of agent $i$.

\textbf{Part 2:} if agent $i$ is not in the active network $G(\mathbf{t}^{\prime}) = (V(\mathbf{t}^{\prime})$,
$E(\mathbf{t}^{\prime}))$, again her reward is always equal to $0$. If $i \in V(\mathbf{t}^{\prime})$, for all $n_i^{\prime} \subset n_i$, she add one more child $j\in n_i$ into $n_i^{\prime}$. Suppose agent $j$ is already in $V(\mathbf{t}^{\prime})$. In that case, we consider that $j$ is in the layer below $i$, $i$ gets an additional reward without affecting the existing reward, and $i$'s reward remains unchanged if $j$ is in other layers. Alternatively $j$ is a new agent in the active network, then $j$ must be in the next layer  of $i$, the reward of $i$ changes from $(1-\beta) \frac{c_i^{\prime}}{C_k^{\prime}} B_k + \beta \frac{C_{l_{k+1} \cap n_i^{\prime}}}{C_{k+1}^{\prime}} B_{k+1}$ to $(1-\beta) \frac{c_i^{\prime}}{C_k^{\prime}} B_k + \beta \frac{c_j^{\prime} + C_{l_{k+1} \cap n_i^{\prime}}}{c_j^{\prime} + C_{k+1}^{\prime}} B_{k+1}$, which is obviously increased. Hence when agent $i$ invites all her children, she maximizes the reward. 

In conclusion, PRDM is incentive compatible, which indicates that truthful report is the dominant strategy for all agents. In other words, all agents will maximize information propagation while making the largest contributions within their capacity.

\end{proof}

%%%%%%%%%%%%%%%%%%%%%%%%%%%%the:PSP
Next, we will discuss the property of Sybil-proofness. 
\begin{theorem}
The Propagation Reward Distribution Mechanism is parallel Sybil-proof.
\label{th:psp}
\end{theorem}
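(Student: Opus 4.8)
The plan is to reduce a parallel Sybil attack to a single misreport of agent $i$ and then invoke incentive compatibility (Theorem~\ref{the:IC}). Fix agent $i$ with true type $t_i=(n_i,c_i)$, and let $k=dep(i)$ in the active network generated by $(t_i,\mathbf{t}'_{-i})$, so $p_i\subseteq l_{k-1}$. The first step is to locate the fake identities in the layered graph. In a parallel Sybil attack $a_i=(\nu_i,\tau_i)$ every $j\in\nu_i$ is invited by some node of $p_i\subseteq l_{k-1}$, and the only agents who can have an edge into a fake identity are the members of $p_i$ and of $\nu_i$ themselves. Looking at a shortest $s$--path to a minimal-depth fake identity, it must enter $\nu_i$ directly from $p_i$ (it cannot enter from a strictly deeper fake), so every fake identity has depth exactly $k$. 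Consequently the edges among fakes lie inside layer $k$ and are discarded when the layered graph is built; the only edges of the fakes that matter are those going up to $p_i$ and those going down to $\big(\bigcup_{j\in\nu_i}n'_j\big)\cap l_{k+1}\subseteq n_i$.

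The second step introduces the merged report: replace the entire group $\nu_i$ by a single node acting as $i$ with type $\hat t_i=\big(\bigcup_{j\in\nu_i}n'_j,\ \sum_{j\in\nu_i}c'_j\big)$, which is a legal report in $\mathcal{T}'_i$ since $\bigcup_j n'_j\subseteq n_i$ and $\sum_j c'_j\le c_i$. The key claim is
\[
	\sum_{j\in\nu_i}r_j(a_i,\mathbf{t}'_{-i})\ \le\ r_i(\hat t_i,\mathbf{t}'_{-i}).
\]
To prove it, observe that collapsing the depth-$k$ fakes into one depth-$k$ node changes no layer's total contribution, hence --- via the identities $B_{\ell+1}=\tfrac{c_s}{C'_\ell}B$ obtained in the proof of Theorem~\ref{the:ABB} --- it changes no budget $B_\ell$, no depth, and no weight $w_v$ outside the group, while $w_{\hat i}=\sum_{j\in\nu_i}w_j$. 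Therefore the part of the group's reward retained from the contribution phase, $(1-\beta)\sum_j w_j$ (or $\sum_j w_j$ if $k=1$), equals that of the merged node. For the part flowing up from layer $k+1$, fix $c\in l_{k+1}$: its real depth-$k$ parents $R_c$ (with total contribution $\rho_c$) and its weight $w_c$ are identical in the two profiles, and the mass the group collects from $c$ is $\dfrac{\sum_{j\in\nu_i:\,c\in n'_j}c'_j}{\rho_c+\sum_{j\in\nu_i:\,c\in n'_j}c'_j}\,\beta w_c$, which, since $\sum_{j\in\nu_i:\,c\in n'_j}c'_j\le\sum_{j\in\nu_i}c'_j$ and $x\mapsto\frac{x}{\rho_c+x}$ is increasing, is at most $\dfrac{\sum_j c'_j}{\rho_c+\sum_j c'_j}\,\beta w_c$, the mass the merged node collects from $c$. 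Summing both parts over all relevant $c$ gives the claim; the cases $k=1$ (no upward payment, no $1-\beta$ factor) and $k=d$ (no incoming part) are checked the same way.

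The third step is immediate: $\hat t_i$ reports capacity $\sum_j c'_j\le c_i$ and child set $\bigcup_j n'_j\subseteq n_i$, so incentive compatibility (Theorem~\ref{the:IC}) yields $r_i(\hat t_i,\mathbf{t}'_{-i})\le r_i(t_i,\mathbf{t}'_{-i})$; chaining this with the claim gives $\sum_{j\in\nu_i}r_j(a_i,\mathbf{t}'_{-i})\le r_i(t_i,\mathbf{t}'_{-i})$, which is exactly parallel Sybil-proofness.

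The main obstacle I anticipate is the bookkeeping in the merging step: one must verify carefully that relabelling the depth-$k$ clique of fakes as a single node genuinely preserves every agent's depth and every layer's total contribution --- so that no $B_\ell$, no downstream weight $w_c$, and no real parent set $R_c$ moves --- and that fake-to-fake edges, together with any real children the attacker chose not to report or any capacity left unused, are all irrelevant, the last two being absorbed cleanly once the problem is pushed onto Theorem~\ref{the:IC} but still requiring an explicit justification.
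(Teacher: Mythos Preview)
Your proposal is correct and follows the same line as the paper's proof: both reduce the parallel Sybil attack to an ordinary misreport of agent $i$ (all fakes sit in layer $k$ with combined capacity at most $c_i$) and then invoke incentive compatibility (Theorem~\ref{the:IC}). Your version is considerably more detailed than the paper's one-paragraph sketch --- you make the merging step explicit and verify that layer totals, budgets $B_\ell$, and downstream weights are unchanged --- whereas the paper simply asserts that the IC argument already covers this case; the only cosmetic slip is that $\bigcup_{j\in\nu_i} n'_j$ may contain fake identities from $\nu_i$, so the merged child set should really be $\big(\bigcup_j n'_j\big)\cap n_i$, which does not affect your argument.
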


\begin{proof}
Suppose agent $i\in l_k$ ($1 \leq k \leq d$). When agent $i$ does commit a parallel Sybil attack to be $\nu_i=\{i,i_1,\ldots,i_m\}$. It can be simply deduced from the proof of incentive compatible that for all nodes in the set $\nu_i$, their dominant strategy is making the largest contributions within their capacity and invites all their children. However, their capacity is limited by $\sum_{j \in \nu_i}{c^{\prime}_{j}} \leq c_i$, which means that truthful reports without creating fake nodes will maximize the benefit of agent $i$.

\end{proof}
%%%%%%%%%%%%%%%%%%%%%%%%%

%%%%%%%%%%%%%%%%%%%%%%%%%%%%the:PSP
Then we discuss the more general situation of Sybil attacks. Before giving the main conclusion, we first present two lemmas. Lemma~\ref{lemma1} concludes that an agent cannot increase her weight in contribution phase by making fake nodes. 
%%%%%%%%%%%%%%%%%%%%%%%%%%%%fig:SP1
\begin{figure}[h]
\centering
\includegraphics[width=0.7\linewidth]{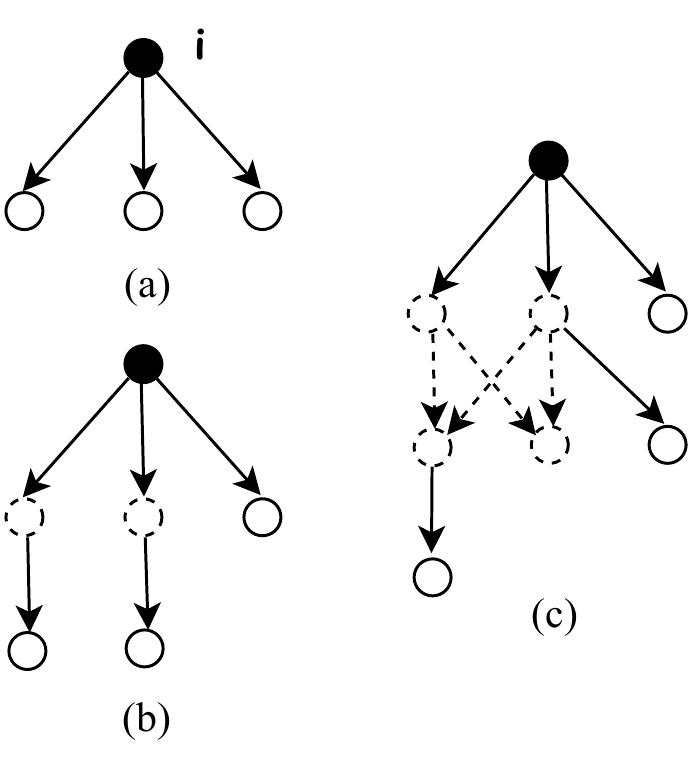}
\caption{(a) is the case where agent $i$ does not commit Sybil attacks, the black node represents agent $i$, and the white nodes represent real participants that $i$ invites. (b) shows the situation where $i$ conducts fake nodes one layer down in which the dashed node represent all the nodes generated by $i$. (c) is the most general form of a Sybil attacks.}
\label{fig:sp1}
% \Description{(a) is the case where agent $v$ does not commit Sybil attacks, the black node represents agent $v$, and the white nodes represent real participants that $v$ invites. (b) shows the situation where $v$ conducts fake nodes one layer down, and the gray nodes represent all the nodes generated by $v$. (c) is the most general form of a Sybil attacks.}
\end{figure}

%%%%%%%%%%%%%%%%%%%%%%%%%%%%lemma1
\begin{lemma}
Each agent $i\in V(\mathbf{t}^{\prime})$ cannot increase the total weight in contribution phase by committing Sybil attack $a_i=(\nu_i, \tau_i)$.
\label{lemma1}
\end{lemma}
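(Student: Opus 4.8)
The plan is to turn any Sybil attack into the honest report by a layer-by-layer exchange argument confined to the contribution phase. I would start from the closed form for the layer budgets obtained inside the proof of Theorem~\ref{the:ABB}: setting $C_0^{\prime}:=c_s$, one has $B_m=\frac{c_s}{C_{m-1}^{\prime}}B$ for every layer, so the contribution-phase weight of an agent $j\in l_m$ is $w_j=\frac{c_j^{\prime}}{C_m^{\prime}}B_m=\frac{c_j^{\prime}c_sB}{C_m^{\prime}C_{m-1}^{\prime}}$. Thus the weights depend only on the cumulative layer contributions $C_1^{\prime},C_2^{\prime},\dots$, which is exactly what the exchange will perturb, and summing over a whole layer telescopes: $\sum_{j\in l_m}w_j=c_sB\bigl(\tfrac1{C_{m-1}^{\prime}}-\tfrac1{C_m^{\prime}}\bigr)$.

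Next I would record the structural effect of the constraint $n_j^{\prime}\subseteq n_i\cup\nu_i$. Every incoming edge of $i$ belongs to $\mathbf{t}^{\prime}_{-i}$, and every fake identity is reachable from $s$ only through $i$; hence the attack changes neither $dep(i)=:k$ nor the depths and reported contributions of the agents outside $\nu_i$ at depth $<k$. Consequently $C_{k-1}^{\prime}$ and the total layer-$k$ contribution of agents other than $i$ are the same numbers $A$ and $R$ as in the honest instance, every member of $\nu_i$ sits at depth $\ge k$, and writing $y_m:=\sum_{j\in\nu_i,\,dep(j)=m}c_j^{\prime}$ we have $\sum_{m\ge k}y_m\le c_i$ together with
\[
\sum_{j\in\nu_i}w_j=\sum_{m\ge k}\frac{y_m\,c_sB}{\widetilde C_m^{\prime}\,\widetilde C_{m-1}^{\prime}},
\]
where $\widetilde C_m^{\prime}$ is the cumulative contribution through layer $m$ in the attacked network, so $\widetilde C_{k-1}^{\prime}=A$.

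The heart of the argument is the exchange step: shifting a mass $\delta>0$ of $\nu_i$'s reported contribution from layer $m+1$ up to layer $m$ (any $m\ge k$) does not decrease $\sum_{j\in\nu_i}w_j$. The key point is that this leaves $\widetilde C_{m'}^{\prime}$ unchanged for every $m'\ne m$, since the $+\delta$ and $-\delta$ cancel in all cumulative sums past layer $m$; hence among the layer budgets only $B_{m+1}=\frac{c_sB}{\widetilde C_m^{\prime}}$ changes, and it decreases. Computing the induced change in $\sum_{j\in\nu_i}w_j$ and factoring out the positive quantity $\frac{\delta c_sB}{\widetilde C_m^{\prime}(\widetilde C_m^{\prime}+\delta)}$ leaves the sign of $\frac{\widetilde C_m^{\prime}-y_m}{\widetilde C_{m-1}^{\prime}}-\frac{\widetilde C_m^{\prime}+y_{m+1}}{\widetilde C_{m+1}^{\prime}}$, which is nonnegative because layer $m$'s total contribution is at least $y_m$, making the first fraction $\ge1$, while layer $(m+1)$'s total is at least $y_{m+1}$, making the second $\le1$. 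Iterating, $\sum_{j\in\nu_i}w_j$ is largest when all of $\nu_i$'s contribution is concentrated at node $i$ in layer $k$; there the sum equals $\frac{(\sum_{j\in\nu_i}c_j^{\prime})\,c_sB}{(A+\sum_{j\in\nu_i}c_j^{\prime}+R)\,A}$, which, since $\sum_{j\in\nu_i}c_j^{\prime}\le c_i$ and $x\mapsto\frac{x}{A+x+R}$ is increasing, is at most $\frac{c_ic_sB}{(A+c_i+R)\,A}$ --- exactly the weight $w_i$ that $i$ obtains under the honest report $t_i$.

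I expect the exchange step to be the only real obstacle --- not the algebra, but the two bookkeeping points it rests on. First, an arbitrary redistribution of $\nu_i$'s contribution over depths $\ge k$ is always realizable by an admissible fake structure: a shortest $s$-to-$j$ path for any occupied fake node $j$ enters the cluster at $i$ and then runs through a chain of fake nodes, so a fake node one layer above any occupied layer already exists to absorb the mass. Second, rearranging the fake part leaves $A$, $R$ and the telescoping identity for the non-fake layers intact, since the only side effect is that a real child of $i$ may be rerouted through fake nodes and thereby pushed deeper, which neither helps the attacker nor touches $C_{k-1}^{\prime}$. The boundary cases $k=1$ (where $A=c_s$ and $B_1=B$) and $i$ in the last layer follow directly from $B_m=\frac{c_s}{C_{m-1}^{\prime}}B$.
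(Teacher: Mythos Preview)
Your argument is correct and rests on the same idea as the paper's: show that placing fake contribution anywhere below $i$'s own layer $k$ only hurts the total contribution-phase weight, so the optimum is to keep everything at layer $k$, where it is dominated by the honest report via $\sum_{j\in\nu_i}c_j'\le c_i$. The executions differ in a way worth noting. The paper compares the honest configuration directly to the ``one layer down'' attack of Figure~\ref{fig:sp1}(b), expands $\mathcal W_i(0)-\mathcal W_i(\delta)$ as an explicit $P/Q$ with $P\ge 0$, and then handles the general multi-layer attack of Figure~\ref{fig:sp1}(c) by an informal recursion. You instead start from an arbitrary attack and run the exchange at a generic pair of adjacent layers $m,m{+}1$, using the closed form $w_j=c_j'c_sB/(\widetilde C_m'\widetilde C_{m-1}')$ from Theorem~\ref{the:ABB}; your sign check $(\widetilde C_m'-y_m)/\widetilde C_{m-1}'\ge 1\ge(\widetilde C_m'+y_{m+1})/\widetilde C_{m+1}'$ replaces the polynomial expansion and makes the general step explicit rather than appealed to by induction. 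What your route buys is a uniform treatment of all layers and cleaner algebra; what the paper's direct computation buys is an immediate identification of the equality case ($C_{l_k\setminus\{i\}}'=C_{l_{k+1}\setminus\nu_{-i}}'=0$). Your structural observations --- that $dep(i)$, $A=C_{k-1}'$, and $R$ are untouched by any attack, and that a fake node at depth $m{+}1$ forces one at depth $m$ --- are the right bookkeeping; the only cosmetic looseness is the positivity constraint $c_j'>0$ when draining a deep layer, which is harmless since the exchange inequality is monotone and one may pass to the limit.
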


% 终于给推出来了，写下来防止明天忘记
% 当i的深度为dep时，给定以下变量：
%   w为前dep-1层的总购买力
%   x为第dep层，除i以外的购买力
%   y为第dep+1层的购买力
%   c为i的购买力，其中有一部分 t(t<=c)被分到下面

% 图示：
%   上一层：      w
%   这一层： c-t    x
%   下一层：  t     y
% 
%  i得到的总weight，是关于t的函数。记为r(t)
% 
% r(0) = c / (w+x+c)
% r(c) = c / (w+x+y+c) * w / (w+x)
% 显然 r(0)>r(c)。

% 待证明：r(0) > r(t)
%           c-t          t           w
% r(t) = --------- + --------- * ---------
%         w+x+c-t     w+x+y+c     w+x+c-t
% 
%               c         c-t              wt
% r(0)-r(t) = ------ - --------- - -------------------
%             w+x+c     w+x+c-t     (w+x+y+c)(w+x+c-t)
%               ↓          ↓
%    1-(w+x)/(w+x+c)    1-(w+x)/(w+x+c-t)
% 
%                       1           1               wt
%           = (w+x)*[--------- - -------] - -------------------
%                     w+x+c-t     w+x+c     (w+x+y+c)(w+x+c-t)
% 
%                           t                     wt
%           = (w+x)*[----------------] - -------------------
%                    (w+x+c)(w+x+c-t)     (w+x+y+c)(w+x+c-t)
% 
% 因为t>0，可以同时去掉分子的 t，分母的(w+x+c-t)。通分后的分子：
%      (w+x)(w+x+y+c) - w(w+x+c)  ……是一个正数。得证

\begin{proof}
Suppose agent $i\in l_k$ ($1 \leq k \leq d$). When agent $i$ does not commit a Sybil attack, the network is shown in Figure~\ref{fig:sp1}(a), the weight of $i$ is $w_i=\frac{c_i^{\prime}}{C_k^{\prime}} B_k$. Let us first show that an agent cannot increase her weight by making several fake nodes as her own children. For convenience, we denote $\nu_{-i} = \nu_{i} \setminus \{i\}$.

Without loss of generality, let $c_i^{\prime} = c_i$. After committing Sybil attack $a_i=(\nu_i, \tau_i)$, agent $i$ can transfer part of her contribution $\delta$ to her fake nodes ($0< \delta < c_i$) and $\sum_{j\in \nu_{-i}}{c_j^{\prime}}=\delta$.  Let $\mathcal{W}_i(\delta)=\sum_{j\in \nu_i}{w_j}$ be the total weight of $i$ and all her fake nodes. According to PRDM, as shown in Figure~\ref{fig:sp1}(b), when all the fake nodes are in the next layer of $i$, we have

% \begin{tiny}
\resizebox{.95\linewidth}{!}{$
	\begin{aligned}
		&\mathcal{W}_i(0)=\frac{c_i}{C_{k - 1}^{\prime} + c_i + C_{l_k \setminus \{i\}}^{\prime}} B_k \\
		&\mathcal{W}_i(\delta)=\frac{c_i-\delta}{C_{k - 1}^{\prime} + c_i + C_{l_k \setminus \{i\}}^{\prime} - \delta} B_k \\
		&+ \frac{\delta}{C_{k - 1}^{\prime} + c_i + C_{l_k \setminus \{i\}}^{\prime} + C_{l_{k+1} \setminus \nu_{-i}}^{\prime}} \frac{C_{k-1}^{\prime}}{C_{k - 1}^{\prime} + c_i + C_{l_k \setminus \{i\}}^{\prime} - \delta} B_k
	\end{aligned}  
	$}
% \end{tiny}

It can be shown that for any $\delta$, there is $\mathcal{W}_i(0)-\mathcal{W}_i(\delta)=\frac{P}{Q}$, where
% \begin{tiny}
\begin{align*}
	P=&\ \delta C_{l_{k} \setminus \{i\}}^{\prime}\left(C_{l_{k} \setminus \{i\}}^{\prime}+C_{l_{k+1} \setminus \nu_{-i}}^{\prime}+C_{k-1}^{\prime}+c_i\right)\\
	&+\delta C_{k-1}^{\prime} C_{l_{k+1} \setminus \nu_{-i}}^{\prime} \geq 0\\
	Q=& \left(C_{k - 1}^{\prime} + c_i + C_{l_k \setminus \{i\}}^{\prime}\right)\left(C_{k - 1}^{\prime} + c_i + C_{l_k \setminus \{i\}}^{\prime} - \delta \right) \\
	&\left(C_{k - 1}^{\prime} + c_i + C_{l_k \setminus \{i\}}^{\prime} + C_{l_{k+1} \setminus \nu_{-i}}^{\prime} \right) > 0 \\
	% &\frac{\delta\left(C_{l_{k} \setminus\{i\}}^{\prime}+C_{k-1}^{\prime}\right)\left(C_{l_{k} \setminus\{i\}}^{\prime}+C_{l_{k+1}}^{\prime}+C_{k-1}^{\prime}+c_i\right)-\delta C_{k-1}^{\prime}\left(C_{l_{k} \setminus\{i\}}^{\prime}+C_{k-1}^{\prime}+c_i\right)}{\left(C_{l_{k} \setminus\{i\}}^{\prime}+C_{l_{k+1}}^{\prime}+C_{k-1}^{\prime}\right)\left(C_{l_{k} \setminus\{i\}}+C_{l_{k+1}}^{\prime}+C_{k-1}^{\prime}+c_i\right)\left(C_{l_{k} \setminus\{i\}}^{\prime}+C_{k-1}^{\prime}+c_i-\delta\right)}
	% =&\frac{c}{w+x+c} - \frac{c-t}{w+x+c-t} - \frac{t}{w+x+y+c} \frac{w}{w+x+c-t} \\
	% =&(w+x)\left(\frac{1}{w+x+c-t} - \frac{1}{w+x+c} \right) - \frac{wt}{(w+x+y+c)(w+x+c-t)} \\
	% =&\frac{(w+x)t}{(w+x+c)(w+x+c-t)} - \frac{wt}{(w+x+y+c)(w+x+c-t)} \\
	% =&\frac{t}{w+x+c-t}\cdot \frac{(w+x)(w+x+y+c)-w(w+x+c)}{(w+x+c)(w+x+y+c)} > 0
\end{align*}
% \end{tiny}
Therefore, we have $\mathcal{W}_i(0) \geq \mathcal{W}_i(\delta)$, agent $i$ cannot increase the total weight by committing Sybil attacks in Figure~\ref{fig:sp1}(b). Let us consider the case $\mathcal{W}_i(0) = \mathcal{W}_i(\delta)$, which implies that $P=0$. Then it can be obtained $C_{l_{k} \setminus \{i\}}^{\prime} = 0$ and $C_{l_{k+1} \setminus \nu_{-i}}^{\prime} = 0$, which shows that there are no other agents in the $k$-th and $k+1$-th layers. Recursively, the most general case in Figure~\ref{fig:sp1}(c) can be generated by repeating the above steps. Therefore, we have $\mathcal{W}_i(0) \geq \mathcal{W}_i(\delta)$ for any Sybil attack $a_i$, agent $i$ cannot increase her total weight by committing Sybil attacks.

\end{proof}

The conclusion of the Lemma~\ref{lemma2} is that an agent cannot make her reward from non-fake-node children (not $i$'s fake nodes) too much by creating fake nodes. Here we give the majority assumption: $c_i \leq \left(\sqrt{\frac{1}{1-\beta}} - 1\right)\left(C_{k-1}^{\prime} + C_{l_k \setminus \{i\}}^{\prime}\right)$ which implies that agent $i$'s capacity cannot take up $\left(\sqrt{\frac{1}{1-\beta}} - 1\right)$ times the sum of the capacity of $i$'s layer and above which is similar to Bitcoin's 51\% attack~\cite{nakamoto2008bitcoin}.

%%%%%%%%%%%%%%%%%%%%%%%%%%%%lemma2
\begin{lemma}
For $0 < \beta \leq \frac{1}{2}$, each agent $i\in V(\mathbf{t}^{\prime})$ cannot increase $\frac{1}{1-\beta}$ times the reward received from her non-fake-node children by any Sybil attack $a_i=(\nu_i, \tau_i)$  under the {majority assumption}.
\label{lemma2}
\end{lemma}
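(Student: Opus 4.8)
The plan is to isolate the exact reward stream that the coalition $\nu_i$ siphons out of its \emph{non-fake} children, reduce an arbitrary Sybil attack to a single canonical worst case by monotonicity, and then control the resulting one-parameter expression with the majority assumption.

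\textbf{Setup.} Say $i\in l_k$ in the layered graph of the honest profile and put $A:=C_{k-1}'+C_{l_k\setminus\{i\}}'$, so that the majority assumption reads exactly $A+c_i\le\sqrt{1/(1-\beta)}\,A$. From the proof of Theorem~\ref{the:ABB} the residual budget entering depth $m$ is $B_m=\frac{c_s}{C_{m-1}'}B$, so an agent $j$ at depth $m$ with report $c_j'$ has weight $w_j=\frac{c_j'c_s}{C_m'C_{m-1}'}B$, and in the propagation phase a parent $f$ of $j$ collects from $j$ exactly $\frac{c_f'}{\sum_{p\in p_j}c_p'}\beta w_j$. Hence, under any attack $a_i$, the reward $\nu_i$ receives from its non-fake children is
\[
\Phi(a_i)=\sum_{j\notin\nu_i,\; p_j\cap\nu_i\ne\emptyset}\frac{\sum_{f\in p_j\cap\nu_i}c_f'}{\sum_{p\in p_j}c_p'}\,\beta w_j\ \le\ \sum_{j\notin\nu_i,\; p_j\cap\nu_i\ne\emptyset}\beta w_j,
\]
whereas with $i$ honest it is $\Phi(\mathrm{hon})=\sum_j\frac{c_i}{\sum_{p\in p_j}c_p'}\beta w_j^{0}$, summed over $i$'s children of honest depth $k+1$ (the superscript $0$ marking the honest profile). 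Two remarks fix the relevant children: since $n_f'\subseteq n_i\cup\nu_i$ for each $f\in\nu_i$ and the reports of the other agents are frozen at $\mathbf{t}'_{-i}$, every non-fake child of $\nu_i$ lies in $n_i$; and a fake node can only be a child of $i$ or of another fake node, so all of $\nu_i\setminus\{i\}$ sits at depth $\ge k+1$ (the parallel component, if any, being covered by Theorem~\ref{th:psp}), which means an attack can never lift a child of $i$ above its honest depth. Thus a child $j$ feeds $\Phi(\mathrm{hon})$ iff its honest depth is $k+1$, and only those same children can feed $\Phi(a_i)$.

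\textbf{Reduction to a canonical attack.} I claim $\Phi(a_i)$ is maximised, up to a supremum over attacks, by the attack in which $i$ keeps every honest-depth-$(k+1)$ child as her own direct child — so it stays at depth $k+1$ — and parks all of her remaining reported capacity $c_i-c_i'$ in a chain of fake descendants at depth $\ge k+2$, so it never enters $C_k'$ or $C_{k+1}'$. Indeed, routing such a child $j$ through any fake node pushes $j$ to depth $\ge k+2$, which enlarges both denominators in $w_j=\frac{c_j'c_s}{C_m'C_{m-1}'}B$ (the extra layers carry non-negative contribution) and can only shrink $\nu_i$'s propagation share of $j$ below $1$; dropping the edge $(i,j)$ removes $i$ from $p_j$ altogether (no fake can be a depth-$k$ parent), forfeiting the whole stream from $j$; and fake capacity cannot be placed at depth $k$ and at depth $k+1$ only inflates $C_{k+1}'$. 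These monotonicities are peeled off one modification at a time, exactly as in the recursive step of Lemma~\ref{lemma1}. I expect this reduction to be the main obstacle: two opposing effects must be balanced simultaneously — shrinking $C_k',C_{k+1}'$ to inflate $w_j$ versus diluting the share $\frac{c_i'}{\sum_{p\in p_j}c_p'}$ — and it is this tension that produces two separate $\sqrt{1/(1-\beta)}$ factors and forces the majority assumption.

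\textbf{The one-parameter estimate.} After the reduction, write $x=c_i'\in(0,c_i]$, let $S$ be the total reported capacity of $i$'s depth-$(k+1)$ children and $F$ that of the other depth-$(k+1)$ agents (attack-independent). Then the canonical attack drives $C_k'\to A+x$ and $C_{k+1}'\to A+x+S+F$, whereas honestly these equal $A+c_i$ and $A+c_i+S+F$; so for a child $j$ whose remaining real depth-$k$ parents contribute $X_j\ge0$, the per-child ratio of attack-reward to honest-reward is
\[
\frac{x/(x+X_j)}{c_i/(c_i+X_j)}\cdot\frac{(A+c_i)(A+c_i+S+F)}{(A+x)(A+x+S+F)},
\]
in which the first factor is $\le1$ since $t\mapsto t/(t+X_j)$ is increasing, $\frac{A+c_i}{A+x}\le\frac{A+c_i}{A}$, and $\frac{A+c_i+S+F}{A+x+S+F}\le\frac{A+c_i+S+F}{A+S+F}\le\frac{A+c_i}{A}$; by the majority assumption $\frac{A+c_i}{A}\le\sqrt{1/(1-\beta)}$, so the ratio is at most $\big(\tfrac{A+c_i}{A}\big)^2\le\frac{1}{1-\beta}$. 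Summing over the honest-depth-$(k+1)$ children yields $\Phi(a_i)\le\frac{1}{1-\beta}\Phi(\mathrm{hon})$, which is the claim. The extreme layers are immediate: if $k=d$ then no admissible configuration gives $i$ a depth-$(k+1)$ child, so $\Phi\equiv0$; if $k=1$ the same inequalities apply with $C_{k-1}'=c_s$ and the share constraint vacuous whenever $i$ is a child's sole parent.
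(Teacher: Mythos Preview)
Your argument is correct and follows essentially the same route as the paper: reduce an arbitrary attack to the canonical configuration in which all fake capacity sits below layer $k+1$, then bound the resulting ratio of propagation rewards by $\bigl(\tfrac{A+c_i}{A}\bigr)^2\le \tfrac{1}{1-\beta}$ via the majority assumption, where $A=C_{k-1}'+C_{l_k\setminus\{i\}}'$. Your treatment is in fact more explicit than the paper's on two points the paper glosses over --- the monotonicity justifications behind the reduction, and the parent-share factor $\tfrac{x}{x+X_j}\big/\tfrac{c_i}{c_i+X_j}\le 1$ --- but the decomposition, the worst-case identification, and the final two-factor estimate are identical.
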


% 从孩子那里拿到的钱，只跟孩子的weight有关。造假前后的比例为什么会 跟 β 有关？
% R(0) / R(δ) = 1 / (1+q)^2

\begin{proof}
Without loss of generality, let $c_i^{\prime} = c_i$. After committing Sybil attack $a_i=(\nu_i, \tau_i)$, agent $i$ can transfer part of her contribution $\delta$ to her fake nodes ($0< \delta < c_i$). Let $\mathcal{R}_i(\delta)$ be the reward received from her non-fake-node children. If $i$ does not commit a Sybil attack, the network is as shown in Figure~\ref{fig:sp1}(a) such that the reward is  $\mathcal{R}_i(0)$. We have
% \begin{small}
\begin{equation*}
	\resizebox{\linewidth}{!}{$
		\mathcal{R}_i(0)=\beta \frac{\sum_{j \in l_{k+1} \cap n_{i}^{\prime}} \frac{c_{i}}{\sum_{m \in p_{j}} c_{m}^{\prime}}}{C_{k - 1}^{\prime} + c_i + C_{l_k \setminus \{i\}}^{\prime} + C_{l_{k+1}}^{\prime}} \frac{C_{k - 1}^{\prime}}{C_{k - 1}^{\prime} + c_i + C_{l_k \setminus \{i\}}^{\prime}} B_k 
		$}
\end{equation*}
% \end{small}
For a fixed $\delta$, in Figure~\ref{fig:sp1}(c), transferring more of $\delta$ to ($k+1$)-th layer of lower makes $i$ receive more rewards from her non-fake-node children. Thus
\begin{equation*}
\mathcal{R}_i(\delta) < \beta \frac{\sum_{j \in l_{k+1} \cap n_{i}^{\prime}} \frac{c_{i}}{\sum_{m \in p_{j}} c_{m}^{\prime}}}{C_{k - 1}^{\prime} + C_{l_k \setminus \{i\}}^{\prime} + C_{l_{k+1}}^{\prime}} \frac{C_{k - 1}^{\prime}}{C_{k - 1}^{\prime} + C_{l_k \setminus \{i\}}^{\prime}} B_k 
\end{equation*}
If $c_i \leq \left(\sqrt{\frac{1}{1-\beta}} - 1\right)\left(C_{k-1}^{\prime} + C_{l_k \setminus \{i\}}^{\prime}\right)$, which is the majority assumption, then
\begin{equation*}
\resizebox{\linewidth}{!}{$
	\begin{aligned}
		\frac{\mathcal{R}_i(\delta)}{\mathcal{R}_i(0)} &< \frac{\left(C_{k - 1}^{\prime} + c_i + C_{l_k \setminus \{i\}}^{\prime} + C_{l_{k+1}}^{\prime}\right)\left(C_{k - 1}^{\prime} + c_i + C_{l_k \setminus \{i\}}^{\prime}\right)}{\left(C_{k - 1}^{\prime} + C_{l_k \setminus \{i\}}^{\prime} + C_{l_{k+1}}^{\prime}\right)\left(C_{k - 1}^{\prime} + C_{l_k \setminus \{i\}}^{\prime}\right)}\\
		&\le \frac{\left(C_{k - 1}^{\prime} + c_i + C_{l_k \setminus \{i\}}^{\prime}\right)^2}{\left(C_{k - 1}^{\prime} + C_{l_k \setminus \{i\}}^{\prime}\right)^2} \le \frac{1}{1-\beta}
	\end{aligned}
	$}
	\end{equation*}
\end{proof}

\begin{theorem}
The Propagation Reward Distribution Mechanism is $\frac{1}{1-\beta}$-Sybil-proof with $0 < \beta \leq \frac{1}{2}$ under the majority assumption.
\label{th:sp}
\end{theorem}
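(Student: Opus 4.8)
The plan is to bound the coalition's total take $\sum_{j\in\nu_i} r_j(a_i,\mathbf{t}'_{-i})$ by tracking, in the \emph{layered} graph, exactly where the contribution‑phase weight and the propagation‑phase $\beta$‑transfers actually flow, and then to feed the result into Lemmas~\ref{lemma1} and~\ref{lemma2}. As in those two lemmas, the first move is a reduction: by the monotonicity of $r_i$ in $c'_i$ and $n'_i$ already used to prove Theorems~\ref{the:IC} and~\ref{th:psp}, it is without loss of generality to take, in the honest baseline, $c'_i=c_i$ and $n'_i=n_i$, and, under the attack, the coalition's total contribution equal to $c_i$, with $i$ retaining $c_i-\delta$ and $\nu_{-i}:=\nu_i\setminus\{i\}$ sharing $\delta$. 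If $i$ is inactive in $\mathbf{t}'_{-i}$ the inequality reads $0\le 0$, so assume $i\in l_k$.

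Three structural facts drive the argument. First, every fake identity in $\nu_{-i}$ sits at depth $\ge k+1$ in the layered graph, since genuine agents never invite fake nodes, so the only coalition member at depth $k$ is $i$ itself. Second — and this is where the one‑step nature of the propagation rule matters — the only layered edge leaving $\nu_i$ ``upward'' is from $i$ to its genuine parents $p_i\subseteq l_{k-1}$; hence the \emph{only} weight that escapes the coalition in the propagation phase is the amount $\beta\,w_i^{\mathrm{atk}}\ge 0$ that $i$ forwards to $p_i$ (and nothing when $k=1$), while every $\beta$‑transfer among fake nodes lands on a parent that is again in $\nu_i$. Third, the coalition's propagation‑phase \emph{inflow} from genuine (non‑coalition) children is largest when each such child is a direct child of $i$ at depth $k+1$: routing a genuine child through a fake node only pushes it deeper, shrinking its contribution‑phase weight and hence what it forwards up, and any reward a fake node skims from a genuine child is reward $i$ would otherwise have collected; so this inflow $\rho(\delta)$ satisfies $\rho(\delta)\le\mathcal R_i(\delta)$, the quantity bounded in Lemma~\ref{lemma2}. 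Writing $\mathcal W_i(\delta)=\sum_{j\in\nu_i}w_j$, these facts give
\begin{equation*}
\sum_{j\in\nu_i} r_j(a_i,\mathbf{t}'_{-i})\;=\;\mathcal W_i(\delta)-\beta\,w_i^{\mathrm{atk}}+\rho(\delta)\;\le\;\mathcal W_i(\delta)+\mathcal R_i(\delta),
\end{equation*}
where the equality is for $k>1$ and for $k=1$ the leakage term is simply absent and the same upper bound still holds.

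Now invoke the lemmas. Lemma~\ref{lemma1} gives $\mathcal W_i(\delta)\le\mathcal W_i(0)=w_i$ with $w_i=\frac{c_i}{C'_{k-1}+c_i+C'_{l_k\setminus\{i\}}}B_k$, and Lemma~\ref{lemma2} — the step that consumes the majority assumption and $0<\beta\le\frac12$ — gives $\mathcal R_i(\delta)\le\frac1{1-\beta}\mathcal R_i(0)$, where $\mathcal R_i(0)$ is $i$'s honest reward from its genuine children. Hence $\sum_{j\in\nu_i} r_j(a_i,\mathbf{t}'_{-i})\le w_i+\frac1{1-\beta}\mathcal R_i(0)$. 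Since $i$'s honest reward is its retained weight plus $\mathcal R_i(0)$, we have $r_i(t_i,\mathbf{t}'_{-i})\ge(1-\beta)w_i+\mathcal R_i(0)$ in every layer (with equality when $1<k<d$ and $i$ is the unique parent of its children, and only larger otherwise). Dividing,
\begin{equation*}
\frac1{1-\beta}\,r_i(t_i,\mathbf{t}'_{-i})\;\ge\;w_i+\frac1{1-\beta}\mathcal R_i(0)\;\ge\;\sum_{j\in\nu_i} r_j(a_i,\mathbf{t}'_{-i}),
\end{equation*}
which is the claimed $\frac1{1-\beta}$‑Sybil‑proofness.

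The main obstacle is the structural bookkeeping of the second paragraph: establishing the third fact — that no fake topology lets the coalition drain more than $\mathcal R_i(\delta)$ from genuine agents, so that the quantity in Lemma~\ref{lemma2} really is the right thing to match against — and verifying that the layered restriction together with the one‑step propagation rule genuinely confines every internal $\beta$‑transfer to $\nu_i$ while letting only $\beta\,w_i^{\mathrm{atk}}$ escape through $i$'s genuine parents. Once these are in hand the rest is a mechanical chaining of Lemmas~\ref{lemma1} and~\ref{lemma2} with the $c'_i$‑monotonicity of $r_i$ from Theorem~\ref{the:IC}. A secondary point to check is the boundary layers: $k=1$ (no upward leakage, and $\frac1{1-\beta}\ge 1$ absorbs the missing discount on $w_i$) and $k=d$ (where $i$'s genuine children necessarily stay at depth $\le d$, so $\mathcal R_i(0)=\rho(\delta)=0$).
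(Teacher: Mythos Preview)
Your proposal is correct and follows essentially the same approach as the paper: bound the coalition's total reward by $\mathcal{W}_i(\delta)+\mathcal{R}_i(\delta)$, invoke Lemma~\ref{lemma1} and Lemma~\ref{lemma2}, and compare against $r_i(t_i,\mathbf{t}'_{-i})\ge(1-\beta)\mathcal{W}_i(0)+\mathcal{R}_i(0)$. The paper's own proof is a three-line chain of these inequalities; your version adds the explicit structural bookkeeping (the leakage term $\beta w_i^{\mathrm{atk}}$, the argument that internal $\beta$-transfers stay inside $\nu_i$, and the claim $\rho(\delta)\le\mathcal{R}_i(\delta)$) that the paper simply asserts when writing $\sum_{j\in\nu_i}r_j<\mathcal{W}_i(\delta)+\mathcal{R}_i(\delta)$, as well as the boundary-layer checks.
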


\begin{proof}
For an agent not in the active network, her reward is always $0$, and Theorem~\ref{th:sp} holds. For any agent $i\in V(\mathbf{t}^{\prime})$, the reward of $i$ has two parts, the first part comes from her weight, and the second part comes from her non-fake-node children. For all $t_i \in \mathcal{T}_i$, all $\mathbf{t}^{\prime}_{-i} \in \mathcal{T}^{\prime}_{-i}$ and $a_i \in \mathcal{A}_i$, combine Lemma~\ref{lemma1} and Lemma~\ref{lemma2}, we have      
\begin{align*}
&\sum_{j\in \nu_i}{r_j(a_i, \mathbf{t}^{\prime}_{-i})} < \mathcal{W}_i(\delta) + \mathcal{R}_i(\delta)\\
&\le \frac{1}{1-\beta}(1-\beta)\mathcal{W}_i(0) + \frac{1}{1-\beta}\mathcal{R}_i(0)\\
&\leq \frac{1}{1-\beta} r_i(t_i, \mathbf{t}^{\prime}_{-i})
\end{align*}
\end{proof}

When $\beta = 0$, there is no reward for propagating information in this situation, and the mechanism is SP. When $\beta = \frac{1}{2}$, PRDM is $2$-SP, which indicates an agent who commits any Sybil attack will not receive twice the reward she truthfully reports.

%%%%%%%%%%%%%%%%%%%%%%%%%%%%%% %%%%%%%%%%%%%%%%%%%%%%%%%%%%%%%%%%%%%%%%%%
\subsection{Example}

Then we give an example to illustrate the incentive compatibility and Sybil-proofness below.

\begin{example}

The original active network is the same as Figure in paper. Consider the following two strategies that agent $1$ may adopt respectively:
\begin{enumerate}
\item Agent $1$ creates a fake node $9$ as her children, and transfers $\delta$ of her contribution to $9$.
\item Agent $1$ does not invite agent $4$.
\end{enumerate}

\end{example}

%%%%%%%%%%%%%%%%%%%%%%%%%%%%fig:sp_example
\begin{figure}[h]
\centering
\includegraphics[width=0.7\linewidth]{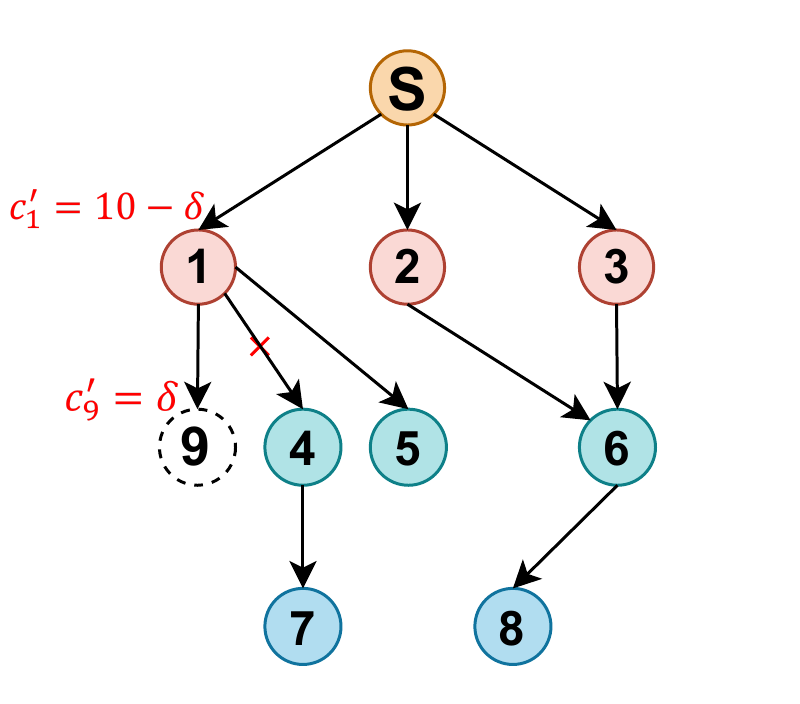}
\caption{The strategies agent $1$ may adopt: agent $1$ can transfer $\delta$ ($0 < \delta < 10$) of her contribution to agent $9$ and she can disinvite agent $4$.}
\label{fig:sp_example}
% \Description{The strategies agent $1$ in Figure~\ref{fig:example} may adopt. Agent $1$ can transfer $\delta$ ($0 < \delta < 10$) of her contribution to agent $9$ and she can disinvite agent $4$.}
\end{figure}

The active network under agent $1$'s manipulation is shown in Figure~\ref{fig:sp_example}. In this case, agent $1$'s utility is the total reward of agent $1$ and agent $9$. The relationship between her utility and $\delta$ is shown in Figure~\ref{fig:sp_delta}. From this figure, we can obtain the following conclusions.
\begin{enumerate}
\item Creating fake node $9$ reduces agent $i$'s utility.
\item Agent $1$'s utility decreases when she does not invite agent $4$.
\end{enumerate}

%%%%%%%%%%%%%%%%%%%%%%%%%%%%fig:sp_delta
\begin{figure}[h]
\centering
\includegraphics[width=0.8\linewidth]{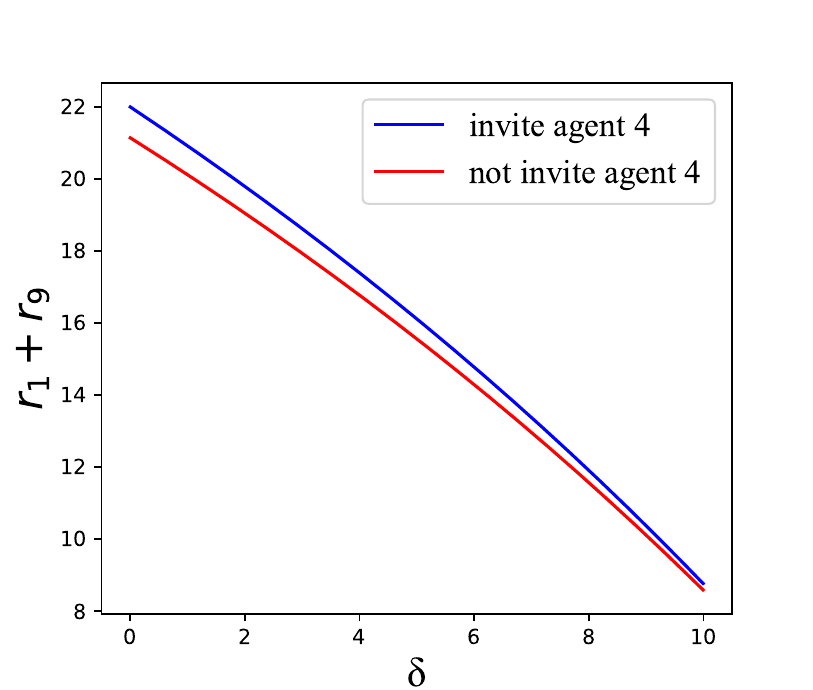}
\caption{Relationship between agent $1$'s total utility ($r_1+r_9$) when agent $1$ transfers $\delta$ of her contribution to her fake nodes ($0 < \delta < 10$), under both conditions whether she invites agent $4$.}
\label{fig:sp_delta}
% \Description{Relationship between agent $1$'s total utility ($r_1+r_9$) when agent $1$ transfers $\delta$ of her contribution to her fake nodes ($0 < \delta < 10$), under both conditions whether she invites agent $4$.}
\end{figure}

%%%%%%%%%%%%%%%%%%%%%%%%%%%%%%%%%%%%%%%%%%%%%%%%%%%%%%%%%%%%%%%%%%%%%%%%

%%%%%%%%%%%%%%%%%%%%%%%%%%%%%%%%%%%%%%%%%%%%%%%%%%%%%%%%%%%%%%%%%%%%%%%%

\section{Discussion}\label{section5}
Intuitively, there are somewhat conflicts between IC and SP. To satisfy incentive compatibility, we should give an extra reward to those agents who invite more participants. On the other hand, we should reduce the reward of agents who make fake identities to satisfy Sybil-proofness. In the scenario where capacity is not introduced, we define strong IC and strong SP as invitations that necessarily increase agent’s reward and falsifications that necessarily decrease agent’s reward. The following is an impossibility result.

\begin{proposition}
If a reward distribution mechanism $M$ satisfies both IC and SP, then it must be neither strong IC nor strong SP.
\end{proposition}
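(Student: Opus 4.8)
The plan is to exploit the tension between IC and SP on a single minimal gadget, and, for both required directions (failure of strong IC and failure of strong SP), to derive a contradiction by forcing an \emph{active} agent to receive a negative reward --- impossible once we recall that the mechanisms of interest are individually rational (and in fact give active agents strictly positive rewards). I work in the capacity-free model of this section, so a report is just a set of invitees; I read ``strong IC'' as: whenever an agent adds to her report a child who thereby becomes newly active, her reward strictly increases; and ``strong SP'' as: whenever an agent performs a Sybil attack creating at least one fake identity, the total reward of her identities is strictly smaller than her truthful reward.

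The crux is an indistinguishability observation. Let $G_0$ be the network with only the sponsor $s$ and one agent $i$, edge $s\to i$; let $G_1$ have vertices $\{s,i,j\}$ and edges $s\to i\to j$. If, on $G_0$, agent $i$ performs the Sybil attack $a_i$ creating a single fake leaf identity $i_1$ (reporting $n_i'=\{i_1\}$ and $n_{i_1}'=\emptyset$), then the report profile $(a_i,\cdot)$ fed to $M$ is, after identifying $i_1$ with $j$, exactly the truthful report profile on $G_1$ in which $i$ reports $n_i'=\{j\}$. Hence $r_i(a_i,\cdot)=r_i^{G_1}(\{j\})$ and $\sum_{\ell\in\nu_i}r_\ell(a_i,\cdot)=r_i^{G_1}(\{j\})+r_j^{G_1}(\{j\})$; and since inactive vertices never enter the active network, $i$'s truthful reward on $G_0$ equals $r_i^{G_1}(\emptyset)$, the reward of $i$ on $G_1$ when she invites nobody. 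Because IR is a property of $M$ on every report profile (and $G_1$ genuinely realizes the post-attack profile), we may invoke $r_j^{G_1}(\{j\})\ge 0$.

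Given this, both halves are short. Suppose $M$ were strong SP. Strong SP applied to $a_i$ gives $r_i^{G_1}(\{j\})+r_j^{G_1}(\{j\})<r_i^{G_1}(\emptyset)$; but on $G_1$ the true type of $i$ is ``children $=\{j\}$'', so IC gives $r_i^{G_1}(\{j\})\ge r_i^{G_1}(\emptyset)$, which together with $r_j^{G_1}(\{j\})\ge 0$ contradicts that strict inequality. Suppose instead $M$ were strong IC. On $G_1$, inviting $j$ makes $j$ newly active, so strong IC gives $r_i^{G_1}(\{j\})>r_i^{G_1}(\emptyset)$; adding $r_j^{G_1}(\{j\})\ge 0$ and translating back through the indistinguishability observation yields $\sum_{\ell\in\nu_i}r_\ell(a_i,\cdot)>r_i(t_i,\cdot)$ on $G_0$, contradicting SP. Hence an IC-and-SP mechanism can be neither strong IC nor strong SP.

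The step I expect to require the most care is the indistinguishability observation, in particular justifying that the output of $M$ cannot depend on which identities are ``real'', so that applying IR and IC on the fictitious graph $G_1$ is legitimate; one must state IR and IC as quantified over all report profiles (resp.\ all graphs) and note that $G_1$ is a bona fide graph inducing exactly the post-attack profile. The remaining points --- that inactive nodes do not perturb other agents' rewards, and pinning down the precise strict-inequality meaning of the two ``strong'' properties, which the excerpt states only informally --- are routine once made explicit.
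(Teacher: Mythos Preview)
Your proposal is correct and follows essentially the same idea as the paper's proof: the indistinguishability between ``$i$ invites a genuine new child $j$'' and ``$i$ creates a single fake leaf identity'' lets one play IC against SP on the same report profile, and the resulting squeeze forces $r_i^1=r_i^2$ and $r_j=0$, killing both strict versions. The paper's version is more compressed --- it derives the two equalities first and then observes that both strong properties fail --- whereas you argue each failure by a separate contradiction, but the content is the same, including the (tacit in the paper, explicit in your write-up) reliance on IR to rule out $r_j<0$.
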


\begin{proof}
In mechanism $M$, for any agent $i$, assume that the original reward is $r_i^1$ and the reward for inviting one more person $j$ is $r_i^2$. Since it is impossible to distinguish whether the invited extra person is fake, IC requires $r_i^1 \leq r_i^2$ and SP requires $r_i^1 \geq r_i^2 + r_j$. Clearly we can obtain $r_j = 0$ and $r_i^1 = r_i^2$, which suggests that the mechanism $M$ must be neither strong IC nor strong SP.

\end{proof}

Here, we briefly describe another mechanism which is IC and SP.

\begin{mechanism}
Each agent $i\in n_s$ gets a reward of $B/|n_s|$, other agents have no reward.
%The reward of each agent $i \in l_1$ is $r_i = B/card(l_1)$ and other agents have no reward.
\end{mechanism}

$|n_s|$ denotes the number of the sponsor's children. Obviously, the above mechanism satisfies the IC and SP but cannot incentivize agents to propagate
, so we need extra information.
% With the capacity, is there impossibility theories about IC and SP? Or there exists a better mechanism? These open questions are worth pondering and solving.

%%%%%%%%%%%%%%%%%%%%%%%%%%%%%%%%%%%%%%%%%%%%%%%%%%%%%%%%%%%%%%%%%%%%%%%%

\section{Conclusion}\label{section6}
In this paper, we design a novel reward distribution mechanism for information propagation in social networks with limited budgets called Propagation Reward Distribution Mechanism. PRDM can achieve maximum information propagation and motivate all participants to contribute their maximum capacities while resisting Sybil attacks. PRDM is also asymptotically budget balanced.

% To prevent agents from making fake identities, we require that our distribution function tends to appropriately reduce the distribution to agents among them when the number of total participants increases. However, this would lead to the possibility that the reward to the propagators may decrease when she reports truthfully. Is there a better mechanism to achieve IC and SP might be the future direction?

At the same time, in addition to creating fake nodes alone, agents can collude (multiple individuals cooperating in manipulation)~\cite{robert2007}. There is a trade-off among the aspects of Sybil attacks, collusion problem and incentive effect. Requiring all these properties leaves us with very limited design space. It is also an interesting topic to consider the trade-offs between these limitations.

%%%%%%%%%%%%%%%%%%%%%%%%%%%%%%%%%%%%%%%%%%%%%%%%%%%%%%%%%%%%%%%%%%%%%%%%
	
	%% The file named.bst is a bibliography style file for BibTeX 0.99c
\bibliographystyle{named}
\bibliography{ecai}
\end{document}